\newcommand{\ra}[1]{\renewcommand{\arraystretch}{#1}}
\newcommand\numberthis{\addtocounter{equation}{1}\tag{\theequation}}
\newcommand{\name}{{\fontfamily{cmss}\selectfont{Pr$\epsilon\epsilon$ch}}\xspace}
\newtheorem{theorem}{Theorem}[section]
\newtheorem{lemma}[theorem]{Lemma}
\theoremstyle{definition}
\newtheorem{definition}{Definition}[section]
\theoremstyle{assumption}
\newtheorem{assumption}{Assumption}
\newcommand{\CSP}{\textsf{CSP}\xspace}
\newcommand{\OSP}{\textsf{OSP}\xspace}
\newcommand{\CSPs}{\textsf{CSP}s\xspace}
\newcommand{\minrev}[1]{{\color{black} #1}}
\newcommand{\TTS}{\textsf{TTS}\xspace}
\newcommand{\ASR}{\textsf{ASR}\xspace}
\newcommand{\WER}{\textsf{WER}\xspace}
\newcommand{\VC}{\textsf{VC}\xspace}
\newcommand{\CDF}{\textsf{CDF}\xspace}
\newcommand{\SWS}{\textsf{SWS}\xspace}
\newcommand{\NLP}{\textsf{NLP}\xspace}
\newcommand{\NER}{\textsf{NER}\xspace}
\newcommand{\KWS}{\textsf{KWS}\xspace}
\newcommand{\DP}{\textsf{DP}\xspace}
\newcommand{\TPR}{\textsf{TPR}\xspace}
\newcommand{\FPR}{\textsf{FPR}\xspace}
\newcommand{\TFIDF}{\textsf{TF-IDF}\xspace}
\newcommand{\squishlist}{
	\begin{list}{$\bullet$}
		{
			\setlength{\itemsep}{0pt}
			\setlength{\parsep}{3pt}
			\setlength{\topsep}{3pt}
			\setlength{\partopsep}{0pt}
			\setlength{\leftmargin}{1.5em}
			\setlength{\labelwidth}{1em}
			\setlength{\labelsep}{0.5em} } }
\newcommand{\squishend}{
\end{list}  }
  \titlespacing\paragraph{0pt}{2pt plus 4pt minus 2pt}{4pt plus 2pt minus 2pt}
\newcommand{\squishlistnum}{
	\begin{enumerate}
		{
			\setlength{\itemsep}{0pt}
			\setlength{\parsep}{3pt}
			\setlength{\topsep}{3pt}
			\setlength{\partopsep}{0pt}
			\setlength{\leftmargin}{1.5em}
			\setlength{\labelwidth}{1em}
			\setlength{\labelsep}{0.3em} } }
\newcommand{\squishendnum}{
\end{enumerate}  }
\newif\ifpaper
\begin{document}

\title{\Large \bf \name: A System for Privacy-Preserving Speech Transcription}

\author{
{\rm Shimaa Ahmed, Amrita Roy Chowdhury, Kassem Fawaz, and Parmesh Ramanathan}\\
University of Wisconsin-Madison \\
\{ahmed27, roychowdhur2, kfawaz, parmesh.ramanathan\}@wisc.edu
}

\maketitle
\begin{abstract}
New advances in machine learning  
have made Automated Speech Recognition (\ASR) systems practical and more scalable. These systems, however, pose serious privacy threats as speech is a rich source of sensitive acoustic and textual information. 
Although offline and open-source \ASR eliminates the privacy risks, its transcription performance is inferior to that of cloud-based \ASR systems, especially for real-world use cases. In this paper, we propose \name, an end-to-end speech transcription system which lies at an intermediate point in the privacy-utility spectrum. It protects the acoustic features of the speakers’ voices and protects the privacy of the textual content at an improved performance relative to offline \ASR. {Additionally, \name provides several control knobs to allow customizable utility-usability-privacy trade-off.} It relies on cloud-based services to transcribe a speech file after applying a series of privacy-preserving operations on the user’s side. We perform a comprehensive evaluation of \name, using diverse real-world datasets, that demonstrates its effectiveness. \name provides transcription at a 2\% to 32.25\% (mean 17.34\%) relative improvement in word error rate over Deep Speech, while fully obfuscating the speakers' voice biometrics and allowing only a differentially private view of the textual content.
\end{abstract}

\section{Introduction}
\label{sec:intro}

New advances in machine learning and the abundance of speech data have made Automated Speech Recognition (\ASR) systems practical and reliable~\cite{ amodei2016deep, graves2013speech}.  \ASR systems have achieved a near-human performance on standard datasets~\cite{amodei2016deep, graves2013speech}, at a scale. This scalability is desirable in many domains, such as journalism~\cite{mcgregor2015investigating}, law, business, education, and health care, where cost, delay, and third-party legal implications~\cite{nautsch2019gdpr} prohibit the application of manual transcription services~\cite{davino2004assessing}. For example, recent research has identified private voice transcription as one of the challenges journalists face when interviewing sensitive sources~\cite{mcgregor2015investigating}.

Several companies, such as Google and Amazon, provide online APIs for speech transcription.
This convenience, however, comes at the cost of privacy. 
A speech recording contains acoustic features that can reveal sensitive information about the user, such as age, gender~\cite{safavi2018automatic}, emotion~\cite{schuller2013computational, aloufi2019emotionless}, accent, and health conditions~\cite{schuller2013paralinguistics}. The acoustic features are also biometric identifiers of the speakers~\cite{muckenhirn2018towards}, enabling speaker identification and impersonation~\cite{google_speech_synthesis}.  Additionally, the textual content of speech can be sensitive~\cite{nautsch2019gdpr}. For example, medical recordings can contain private health information about patients~\cite{davino2004assessing}, and business recordings can include proprietary information. 
Current cloud services already support several speech processing APIs like speaker identification and diarization. They also support text analysis APIs, such as topic modeling, document categorization, sentiment analysis, and entity detection (Sec.~\ref{sec:textAnalysis}), that can extract sensitive information from text. Applying these APIs to the recorded speech can significantly undermine the user's privacy.

Offline \minrev{and open-source} transcription services, like Deep Speech~\cite{hannun2014deep}, solve 
these privacy challenges
as the speech files never leave the user's trust boundary. 
However,  we find that
their performance does not match that of a cloud service provider~\cite{vaidya2019you}, especially on real-world conversations and different accents (Sec.~\ref{sec:accuracy_comparison}). Thus, the primary goal of this paper is to: \textit{provide an intermediate solution along the utility-privacy spectrum that uses cloud services while providing a formal privacy guarantee.} 

We present \name (Privacy-Preserving Speech) as a means to achieve this goal; it is an end-to-end speech transcription system that: (1) protects the users' privacy along the acoustic and textual dimensions; (2) improves the transcription performance relative to offline \ASR; and (3) provides the user with control knobs to customize the trade-offs between utility, usability, and privacy.

\noindent
\textbf{Textual Privacy:}  \name segments and shuffles the input speech file to break the context of the text, effectively transforming it into a bag-of-words. Then, it injects dummy (noise) segments to provide the formal privacy guarantee of differential privacy (\DP)~\cite{dwork2014algorithmic}. \\
\noindent
\textbf{Acoustic Privacy:}  \name applies voice conversion to protect the acoustic features of the input speech file and ensure noise indistinguishability.

We evaluate \name over a set of real-world datasets covering diverse demographics. Our evaluation shows that \name provides a superior transcription accuracy relative to Deep Speech, the state-of-the-art offline \ASR. Also, \name prevents cloud services from extracting any user-specific acoustic features from the speech. Finally, applying \name thwarts the learning of any statistical models or sensitive information extraction from the text via natural language processing tools.

In summary, the main contributions of this paper are:\\
 \textbf{(1) End-to-end practical system:} We propose \name, a new end-to-end system that provides privacy-preserving speech transcription at an improved performance relative to offline transcription. Specifically, \name shows a relative improvement of  2\% to 32.52\% (mean 17.34\%) in word error rate (\WER) on real-world evaluation datasets over Deep Speech, while fully obfuscating the speakers' voice biometrics and allowing only a \DP view of the textual content. 
\\\textbf{(2) Non-standard use of differential privacy:} \name uses \DP in a \textit{non-standard way}, giving rise to a set of new challenges. {Specifically, the challenges are 
(1) ``noise'' corresponds to concrete words, and need to be added in the speech domain (2) ``noise'' has to be indistinguishable from the original speech (details in Sec.~\ref{sec:DP}).} %
 \\\textbf{(3) Customizable Design:} \name provides several \textit{control knobs} for users to customize the functionality based on their desired levels of utility, usability, and privacy (Sec.~\ref{sec:Q4}). For example, in a relaxed privacy setting, \name's relative improvement in \WER ranges from 44\% to 80\% over Deep Speech (Sec.~\ref{sec:Q4_utility_privacy}).

\ifpaper 
The full version of this paper is available online~\cite{FP}.
\fi  
Some demonstrations of \name are available at this link~\cite{demo}.

\section{Speech Transcription Services}
\label{sec:SpeechTranscription}

We first provide some background on online and offline speech transcription services. Next, we present a utility evaluation using standard and real-world speech datasets. 

\subsection{Background}

Speech transcription refers to the process of extracting text from a speech file. \ASR systems are available to the users either through cloud-based online APIs or offline software. \\
\textbf{(1) Cloud-Based Transcription:} We utilize two cloud-based speech transcription services -- Google's Cloud Speech-to-Text and Amazon Transcribe.
\\
\textbf{(2) Offline Transcription:}
 We consider the Deep Speech architecture from Baidu~\cite{hannun2014deep}, which is trained using Mozilla's \footnote{https://voice.mozilla.org/en/datasets} Common Voice dataset as a representative offline transcription service. This dataset is crowdsourced and open-source.  Specifically, we use the  Deep Speech 0.4.1 model \footnote{https://github.com/mozilla/DeepSpeech} (released in January 2019).  \minrev{Note that we do not consider offline transcribers that are not open for general use. For example, Google's on-device speech recognizer~\cite{google_ondevice} is an offline transcriber that is currently only supported on Google's Pixel devices and does not allow an API or open-source access, limiting its usability. }
 
\paragraph{Notations:} Let $S$ denote the input speech file associated with a ground truth transcript $T^g_S$. The user can either use a cloud service provider (\CSP) or an offline service provider (\OSP) to obtain the transcript (denoted by $T^{CSP}_S$ or $T^{OSP}_S$, respectively). 

\paragraph{Transcription Accuracy:}
\label{sec:wer}
The standard metric for quantifying the accuracy loss from transcription is the word error rate (\WER)~\cite{hannun2014deep}. \WER treats the transcript as a sequence of words. It models the difference between the two sequences by counting the number of deleted words ($D$), the number of substituted words ($U$), and the number of injected words ($I$). If the number of words in $T^g_S$ is $W$, \WER is given as: ${\frac {D+U+I}{W}}$.

\subsection{Utility Comparison}
In this section, we empirically evaluate the utility gap between the \CSP and the \OSP over a wide range of standard and real-world datasets. We use these datasets throughout the paper.

\paragraph{Standard Datasets:}
These datasets include (1) the TIMIT-TEST subset~\cite{garofolo1993darpa}, (2) a subset from Librispeech \textit{dev-clean} dataset~\cite{panayotov2015librispeech}, and (3) the DAPS 
dataset~\cite{mysore2014can}. TIMIT-TEST~\footnote{https://catalog.ldc.upenn.edu/LDC93S1} subset comprises of 1344 utterances by 183 speakers from eight major dialect regions of the United States. The LibriSpeech subset consists of eleven speakers, 20 utterances each. For DAPS, we use the evaluation subset prepared for the 2018 voice conversion challenge~\cite{lorenzo2018voice} that consists of five scripts read by ten speakers: five males and five females.

\paragraph{Real-world Datasets:}
We also assess the real-world performance of both transcription services on non-American accent datasets and real conversations among speakers of different demographics. For the accented datasets, we evaluate 200 utterances of two speakers from the VCTK dataset~\cite{veaux2017cstr}: speaker p262 of a Scottish accent and speaker p266 of an Irish accent. For the real-world datasets, we evaluate 20 minutes of speech from the "Facebook, Social Media Privacy, and the Use and Abuse of Data" hearing before the U.S. Senate~\footnote{https://www.commerce.senate.gov/2018/4/facebook-social-media-privacy-and-the-use-and-abuse-of-data}. We construct the 20 minutes by selecting three continuous chunks of speech from the hearing such that they include nine speakers: 8 senators and Mark Zuckerberg. 
Another real-world dataset is the Supreme Court of the United States case "Carpenter v. United States" ~\footnote{https://www.oyez.org/cases/2017/16-402}. 
For this dataset, we evaluate a total of 40 minutes of speech from the advocates in the case.

\paragraph{Accuracy Comparison:}\label{sec:accuracy_comparison}

Table~\ref{tab:WER_noVC} presents the \WER comparison results. The results show that the \CSPs are superior to the \OSP on all the datasets. The performance gap, however, is more significant on the non-standard datasets; the \CSP outperforms Deep Speech by 60\% to 80\% in \WER. %

\begin{table}[t]
\small
\centering
\ra{1.1}
\scalebox{1}{\begin{tabular}{@{}llcrrr@{}}\toprule
& \textbf{Datasets} & \textbf{Google} & \textbf{AWS} & \textbf{Deep Speech}\\ 
\midrule
\multirow{3}{*}{\rotatebox[origin=c]{90}{Standard}} & LibriSpeech & 9.14 & 8.83 & 9.37\\
& DAPS & 6.70 & 7.53 & 10.65\\
& TIMIT TEST & 6.27 & 7.11 & 20.08\\
\midrule
\multirow{7}{*}{\rotatebox[origin=c]{90}{Non-Standard}} & VCTK p266 & 5.15 & 10.09 & 26.72\\
& VCTK p262 & 4.53 & 7.87 & 15.97\\
& Facebook 1 &5.76 & 7.45 & 24.72\\
& Facebook 2 &3.07 & 8.19 & 26.61\\
& Facebook 3 &8.32 & 9.42 & 30.72\\
& Carpenter 1 & 9.44 &9.44 & 25.85\\
& Carpenter 2 & 9.22 &11.53 & 39.71 \\
\bottomrule
\end{tabular}}%
\caption{\WER(\%) comparison of cloud services, Google and AWS, 
versus the state-of-the-art offline system, Deep Speech.}
\label{tab:WER_noVC}
\end{table} %

\section{Privacy Threat Analysis}
\label{sec:utility_privacy}

We study the privacy threats that a cloud-based transcription service poses while processing private speech data.

\subsection{Voice Analysis}
\label{sec:voice_threat}
The biometric  information embedded in $S$ can leak sensitive information about the speakers, including their emotional status~\cite{schuller2013computational, aloufi2019emotionless}, health condition~\cite{schuller2013paralinguistics}, sex~\cite{safavi2018automatic}, and even identity~\cite{muckenhirn2018towards}. Furthermore, extracting this information enables critical attacks like voice cloning and impersonation attacks~\cite{Wu:2015:SCS:2803221.2803403, lindberg1999vulnerability}. In this section,  we showcase a few representative examples of how cloud-based APIs can pose serious privacy threats to the acoustic features within $S$.

\paragraph{Speaker Diarization:}
\CSPs utilize advanced diarization capabilities to cluster the speakers within a speech file, even if they have not been observed before. The basic idea is to (1) segment the speech file into segments of voice activity, and (2) extract a speaker-specific embedding from each segment, such that (3) segments with close enough embeddings should belong to the same speaker. We verified the strength of the diarization threat over three multi-speaker datasets: VCTK (mixing p266 and p262), Facebook, and Carpenter. We measure the performance of the IBM diarization service using Watson's Speech-to-Text API~\footnote{https://www.ibm.com/cloud/watson-speech-to-text} via Diarization Error Rate (DER). DER estimates the fraction of time the speech file segments are not attributed to the correct speaker cluster. The DER values are 0\%, 4.85\%, and 1.32\% for the three datasets, respectively. Hence, the API can correctly distinguish between, and cluster, the different speakers, more than 95\% of the entire dataset duration despite lacking any prior information about the individual speakers.

\paragraph{Speaker Identification:}
A speaker identification task maps the speech segments in a speech file to an individual. We use the Azure Identification API, which consists of two stages: (1) user enrollment and (2) identification (whether a given voice sample matches any of the enrolled users). The enrollment stage requires only 30 seconds of speech from each user to extract their voice-print.
We enrolled 22 speakers as follows: 10 from DAPS, two from VCTK, two from Carpenter, and eight from Facebook. The identification accuracy was nearly 100\% for all speakers.

\paragraph{Speaker Cloning and Impersonation:}
Lastly, we applied a Tacotron-based speech synthesizer from Google~\cite{google_speech_synthesis}; a network that can synthesize speech in the voice of any speaker. The network generates a target speaker's embedding, which it uses to synthesize speech on a given piece of text. In our setting, we used the network to generate the speakers' embedding in our evaluation datasets.
Then, we synthesized eight speech utterances using the embeddings of each speaker. We enrolled the speakers in Azure's Speech Identification API using their natural voice samples and tested whether the API will map the synthesized segments to the corresponding speaker. Except for the second speaker in Carpenter, the cloned samples were successfully identified as the true speakers. 

\subsection{Text Analysis} 
\label{sec:textAnalysis}

\CSPs possess natural language processing (\NLP) capabilities that enable automated statistical analyses on large sets of documents. Those analyses fall into two broad categories.
The first type involves identifying specific words from the transcript that correspond to sensitive information such as an address, name, and SSN using named-entity extraction~\cite{etzioni2005unsupervised}. The other type of analysis involves statistically analyzing the entire transcript on the whole to extract some semantic or user-identifying information. This analysis uses two types of information: the set of words (i.e., bag-of-words representation of the transcript) and their order of appearance (to capture the context). 
\paragraph{Bag-of-Words Analysis:} One of the most commonplace analysis that treats a document as a bag-of-words is \textit{topic modeling}~\cite{steyvers2007probabilistic, Ramage:2009:LLS:1699510.1699543}. Topic modeling is an unsupervised machine learning technique that identifies clusters of words that best characterize a set of documents.
Another popular technique is \textit{stylometry analysis}, which aims at attributing authorship (in our case, the speaker) of a document based on its literary style. It is based on computing a set of stylistic features like mean word length, words histogram, special character count, and punctuation count from the disputed document~\cite{stylometry}. 
\paragraph{Context-based Analysis:} An example of context-based analysis is sentiment analysis (understanding the overall attitude in a block of text). Text categorization is another example; it refers to classifying a document according to a set of predetermined labels.

\section{\name}
\label{sec:OurSys}

Our discussion in the previous sections highlights a trade-off between privacy and utility. The \OSP provides perfect privacy at the cost of higher error rates, especially for non-standard speech datasets. On the other hand, clear privacy violations accompany revealing the speech recording to the \CSP. Motivated by this trade-off, we present \name, a practical system that lies at an intermediate point along the utility-privacy spectrum of speech transcription.

\subsection{System and Threat Models}
\label{sec:user-system-model}

{ 
We consider the scenario where users have audio recordings of private conversations that require high transcription accuracy. For example, a journalist with recordings of confidential interviews is a paradigmatic user for \name. Other examples include a therapist with recordings of patient therapy sessions or a course instructor with oral examination records of students. \name, however, does not target real-time transcription applications. For example, voice assistants and online transcription (e.g. a live-streaming press conference) are \textit{out-of-scope}. Thus, for our target use cases, the latency of transcription is not a critical concern. }

{The adversary is the \CSP or any other entity having direct or indirect access to the stored speech at the \CSPs. This adversary is capable of the aforementioned voice- and text-based analysis.}

  \begin{figure*}
        \centering
        \includegraphics[width=\textwidth]{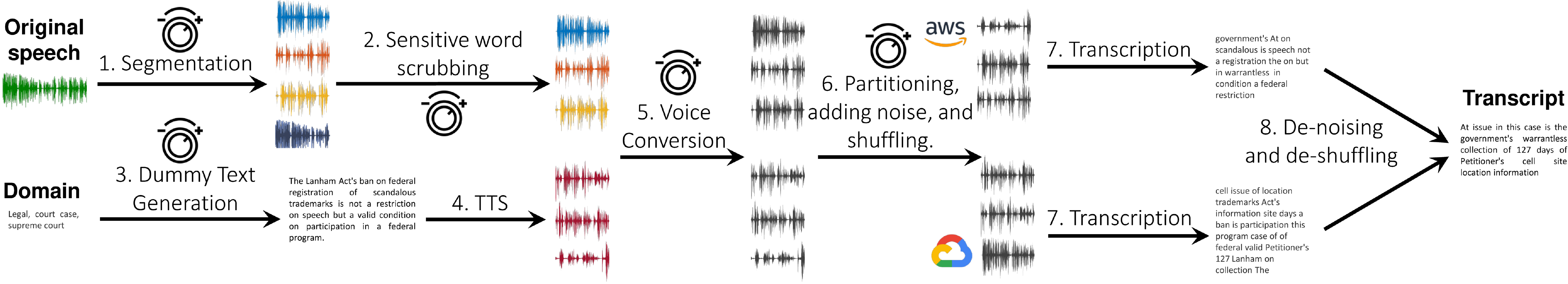}
        \caption{{High-level overview of \name, showing the knobs where a user can tune the associated trade-offs.}}
        \label{fig:high-level}
    \end{figure*}
    
\subsection{\name Overview}

\name provides { an end-to-end tunable system} which aims at satisfying the following design goals: 
 
\begin{enumerate}[leftmargin=*]
\itemsep-0.5em
\item protect the users' privacy along the acoustic and textual dimensions;  
\item improve on the transcription accuracy compared to offline models; and 

\item {provide the users with control knobs to customize \name's functionality according to their desired level of utility, usability, and privacy.
}
\end{enumerate}

\par To this end,  \name applies a series of \textit{privacy-preserving operations} to the input speech file before sending it to the \CSP. 
Fig.~\ref{fig:high-level} shows the high-level overview of \name. Below, we briefly describe \name's privacy-preserving operations.

\subsubsection{Preserving Textual Privacy}
\name protects the privacy of the textual content of an input speech file $S$ through the following three operations:

\paragraph{Segmentation and shuffling:}
\name breaks $S$ into a sequence of segments, denoted by $\mathbb{S}$. This is followed by shuffling the segments to remove all ordering information. Thus, segmenting and shuffling $S$  transform its textual content into a bag-of-words representation.

\paragraph{Sensitive word scrubbing (\SWS):}
First, \name applies the \OSP to identify the list of sensitive keywords that contain numbers, proper nouns, or any other user-specified words.
Next, \name applies keyword spotting, \KWS, (identify portions of the
speech that correspond to a keyword) to each of the
segments in $\mathbb{S}$. Only the segments that do not contain a keyword pass to the \CSP for transcription.

\paragraph{Dummy word injection to ensure differential privacy:}
The bag-of-words representation of a transcript corresponds to its word histogram (Sec.~\ref{sec:DP}). As discussed in Sec.~\ref{sec:textAnalysis}, several statistical analyses can be built on the word histogram of the transcript $T^{CSP}_S$ such as topic modeling or stylometry analysis.
Thus, protecting the privacy of this word histogram is a primary focus of \name, and the privacy guarantee we choose is that of differential privacy. To this end, \name ensures \DP  by adding a suitable amount of dummy words to $S$ before sending it to the \CSP. This way, the \CSP is allowed only a differentially private view of the word histogram 
and any subsequent statistical model built over it (by Thm. \ref{thm:post} in Sec.~\ref{sec:DP}).

The main challenge in this setting is that the dummy words must be added in the speech domain, which \name addresses as follows.  First,  \name estimates the general domain of the text for $S$ (specifically its \textit{vocabulary}, details in Sec.~\ref{sec:DP}) from $T^{OSP}_S$. {Next, it generates dummy text segments using a state-of-the-art \NLP language model.} 
Finally, \name applies text-to-speech (\TTS) transforms to these dummy segments and adds them to $S$. However, leaving it just at this would be insufficient as the \CSP can potentially distinguish between the two different sources of speech (\TTS generated dummy segments and segments in $S$) based on their acoustic features. {Therefore, \name provides the user with multiple options to synthesize \textit{indistinguishable} dummy segments, namely (1) voice cloning~\cite{google_speech_synthesis}, and (2) voice conversion~\cite{kobayashi2018sprocket, VC}. These options offer different trade-offs between utility, usability, and privacy (Secs. \ref{sec:DP:discussion} and \ref{sec:VoiceConversion}}). As stated in Sec.~\ref{sec:textAnalysis}, text-based attacks exploit individual sensitive words or the order of the words or the word histogram. Thus, from the above discussion, \name protects privacy along all three dimensions (evaluation results in Sec.~\ref{sec:evaluation}). 

\subsubsection{Preserving Voice Privacy}
Voice conversion, \VC, is a standard speech processing technique that transforms the voice of a source speaker of a speech utterance to that of another speaker. \name applies voice conversion to fulfill a two-fold agenda. First, it obfuscates the sensitive voice biometric features in $S$. Second, \VC ensures that the dummy segments (noise added to ensure differential privacy) are acoustically indistinguishable from the original speech file segments. There are two main categories in voice conversion: one-to-one \VC, and many-to-one \VC ( Sec.~\ref{sec:VoiceConversion}).

\subsubsection{End-to-End System Description} \label{sec:end-to-end}
Fig.~\ref{fig:high-level} depicts the workflow of \name. Given a speech file $S$, the first step \textbf{(1)} is to break $S$ into a sequence of disjoint and short speech segments, $\mathbb{S}$. This is followed by \textbf{(2)} sensitive word scrubbing where speech segments containing numbers, proper nouns, and user-specified keywords are removed from $\mathbb{S}$.  Next, \textbf{(3)} given the domain of $S$'s textual content (its vocabulary), \name generates a set of text segments (as is suitable for satisfying the \DP guarantee as discussed in Sec.~\ref{sec:DP}), and subjects it to \TTS transformation \textbf{(4)}. At this point, \name has audio segments for the input speech, $\mathbb{S}$, as well as the dummy segments, $\mathbb{S}_d$. If the user also wants to hide the voice biometric information in $S$, \name applies \textbf{(5)} voice conversion over all the segments in $\mathbb{S} \bigcup \mathbb{S}_d$ to convert them to the same target speaker. This process hides the acoustic features of $S$ and ensures that the segments in $\mathbb{S}$  and   $\mathbb{S}_d$ are indistinguishable. This is followed by \name partitioning $\mathbb{S}$ across $N>0$ non-colluding \CSPs (Sec.~\ref{sec:DP}). This partitioning reduces the number of dummy segments that are required to achieve the \DP guarantee (Sec.~\ref{sec:DP}). Next, \name adds a suitable amount of dummy segments from $\mathbb{S}_d$ to each partition $\mathbb{S}_i, i \in [N]$  and shuffles them. Additionally, \name keeps track of time-stamps of the dummy segments, $TS_i$ and order of shuffling, $Order_i$ for each such partition \textbf{(6)}. After obtaining the transcript \textbf{(7)} for each partition from the $N$ \CSPs, \name removes
$\mathbb{S}_d$'s transcripts and de-shuffles the remaining portion of the transcript using $TS_i$ and $Order_i$, and outputs the final transcript to the user \textbf{(8)}. 

\noindent
In what follows, we elaborate on the key components of \name, namely segmentation, sensitive word scrubbing, \DP word histogram release, and voice conversion.

\subsection{Segmentation Algorithm}
\label{sec:Segmentation}
  \begin{figure}
        \centering
        \includegraphics[width=0.9\columnwidth]{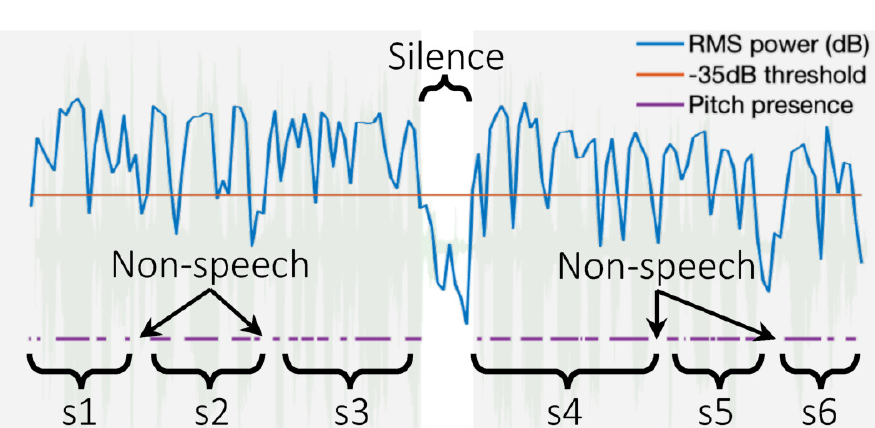}
        \caption{An illustration of \name's segmentation algorithm. The coarse segments in light gray. The absence of pitch information indicate non-speech instances, which further breaks down the coarse segments into finer segments.}
        \label{fig:segmentation}%
    \end{figure}

A key component of \name is breaking the textual context by segmenting $S$. We represent $S$ as a sequence of segments $\mathbb{S}$, where each segment can contain multiple words. \name applies a hierarchical segmentation approach that starts with a stage of silence detection based on the energy level, followed by pitch detection to detect speech activity for finer segmentation. The mechanism is illustrated in 
Fig.~\ref{fig:segmentation}.

We define a \textit{period of silence} as the time duration when the RMS power of the speech signal drops below -35 dB for at least 500ms. The initial segmentation stage detects such silence periods from $S$ resulting in coarse segments. A human speech signal can be viewed as a modulated periodic signal where the signal period 
 is referred to as the \textit{glottal cycle}~\cite{murty2009characterization}. 
 In the second stage, \name uses the existence of glottal cycles~\cite{Boersma:1993} to detect human voice, which breaks down the coarse segments into finer ones. A time duration of at least 20 ms without the presence of glottal cycles is regarded as \textit{non-speech}. 
 
 As some segments might be abrupt or too short to allow for correct speech recognition, \name performs two additional optimization steps. First, it merges nearby fine segments to ensure a minimum length per segment. Second, it does not partition segments at the boundaries of the identified human speech and allows 40 ms of non-speech to be included at the beginning and the end of each segment. \\

\noindent
{\paragraph{Control Knob:} Segmenting $S$ presents with a trade-off -- smaller segments result in better privacy guarantee at the expense of deteriorated transcription accuracy due to semantic context loss. \name allows the user to tune the \textit{minimum length of the segments as a means to control this trade-off.}} \par 

\subsection{Sensitive Word Scrubbing}
\label{sec:sws}
\name performs sensitive word scrubbing (\SWS) as follows. First, it obtains the offline transcript of $S$, $T^{OSP}_S$. Next, it applies named entity recognition (\NER) on $T^{OSP}_S$.
\NER is an  NLP technique that seeks to locate and classify named entities in text into pre-defined categories such as the names of persons, organizations, locations, expressions of times, monetary values, etc.  \name also gives the option for users to specify some keywords of their choice. This allows customization of the sensitive keyword list as users have subjective ideas of what they might consider sensitive. 

After the list of sensitive words is finalized, \name applies keyword spotting (\KWS) on the segments. 
\KWS  is needed for the following three reasons. First, \KWS is used to spot
the user-defined keywords which cannot be identified by \NER. Second, the initial $T^{\OSP}_S$ is generated on $S$ without segmentation to achieve the highest estimation accuracy. However, for \name, we need to identify the segments containing the keywords. Finally, the \OSP
 might not transcribe the named-entities correctly at all locations. For example, the name ``Carpenter'' might be repeated 20 times in $S$, while the \OSP transcribes it accurately only five times. \KWS has higher accuracy in spotting keywords than the \OSP's transcription accuracy.

\paragraph{Control Knob:}  \KWS takes the list of keywords and matches them phonetically to a speech file based on a sensitivity score. This sensitivity score sets a threshold for the phonetic similarity required for a keyword to be spotted.  A low score results in false positives by flagging phonetically similar words as keywords which degrades the utility by transcribing non-sensitive segments using the \OSP. Conversely, a high score could result in some keywords being missed and revealed to the \CSP. Hence, the sensitivity score is a trade-off parameter between privacy and utility (Sec.~\ref{sec:eval:sws}).

\subsection{Differentially Private Word Histogram}\label{sec:DP}

We define $vocabulary, \mathcal{V}$, to be the domain of non-stop and stemmed words from which $T^g_S$ is constructed. 
Let $c_i$ denote the frequency of the word $w_i \in \mathcal{V}$ in $T^g_S$. As is typical in the \NLP literature, we model the transcription as a bag of words: $BoW = \{w_i:c_i|w_i \in \mathcal{V}\}$. Additionally, let $H$ represent $[c_i]$ -- the count vector of $BoW$. %
In other words, the bag of words model represents a histogram on the vocabulary, i.e., a mapping from $\mathcal{V}$ to $ \mathbb{N}^{|\mathcal{V}|}$.

\subsubsection{Privacy Definition}

As discussed in Sec. \ref{sec:textAnalysis}, the aforementioned word histogram is sensitive and can only be released to the \CSP in a privacy-preserving manner.  Our privacy guarantee of choice is \DP which is the de-facto standard for achieving data privacy \cite{dwork2014algorithmic, chen2011publishing, friedman2010data}. \DP provides provable privacy guarantees and is typically achieved by adding noise to the sensitive data.  
\begin{definition}[$(\epsilon,\delta)$-differentially private $d$-distant  histogram release] A randomized mechanism $\mathcal{A}: \mathbb{N}^{|\mathcal{V}|}
\rightarrow \mathbb{N}^{|\mathcal{V}|}$, which maps the original histogram into a noisy one, satisfies $(\epsilon, \delta)$-DP if for any pair of histograms $H_1$ and $H_2$ such that $||H_1-H_2||_1 = d$ and any set $O \subseteq \mathbb{N}^{|\mathcal{V}|}$,
\begin{gather}  Pr[\mathcal{A}(H_1)\in O]\leq  e^{\epsilon}\cdot Pr[\mathcal{A}(H_2)\in O] + \delta.\end{gather}\label{def:DP}\vspace{-0.5cm}
\end{definition}
In our context, the DP guarantee informally means that from the $CSP$'s perspective, the observed noisy histogram, $\tilde{H}$, could have been generated from any histogram within a distance $d$ from the original histogram, $H$. We define the set of all such histograms to be the $\epsilon$-indistinguishability neighborhood for $H$. In other words, from $\tilde{H}$ the $CSP$ will not be able to distinguish between $T^{CSP}_S$ and any other transcript that differs from $T^{CSP}_S$ in $d$ words from  $\mathcal{V}$.  \par

An important result for differential privacy is that any post-processing computation performed on the output of a differentially private algorithm does not cause any loss in privacy.
\begin{theorem}(Post-Processing)\label{thm:post}
Let $\mathcal{A}:  \mathcal{X} \mapsto R$ be a randomized
algorithm that is $(\epsilon,\delta)$-DP. Let $f : R \mapsto R'$ be an
arbitrary randomized mapping. Then $f \circ \mathcal{A} : \mathcal{X} \mapsto R'$ is $(\epsilon,\delta)$-
DP. \end{theorem}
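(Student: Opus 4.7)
The plan is to prove the post-processing theorem by first reducing to the case of deterministic $f$, and then lifting the argument back to arbitrary randomized mappings by conditioning on $f$'s internal randomness. The core idea is that a randomized post-processor cannot create information about the input that the differentially private mechanism $\mathcal{A}$ did not already leak; all it can do is transform the distribution of outputs.

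First I would handle the deterministic case. Fix any pair of neighboring inputs $H_1, H_2$ (with $\|H_1-H_2\|_1 = d$ in the $d$-distant sense used in Definition~\ref{def:DP}) and any measurable set $O' \subseteq R'$. For deterministic $f$, define the preimage $O = f^{-1}(O') = \{r \in R : f(r) \in O'\} \subseteq R$. Then by construction
\begin{equation}
\Pr[f(\mathcal{A}(H_i)) \in O'] = \Pr[\mathcal{A}(H_i) \in O], \quad i \in \{1,2\}.
\end{equation}
Applying the $(\epsilon,\delta)$-DP guarantee of $\mathcal{A}$ to the set $O$ immediately yields
\begin{equation}
\Pr[f(\mathcal{A}(H_1)) \in O'] \le e^{\epsilon}\Pr[f(\mathcal{A}(H_2)) \in O'] + \delta,
\end{equation}
which is the desired inequality for $f\circ\mathcal{A}$.

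Next I would lift this to arbitrary randomized $f$. The standard device is to write $f$ as a deterministic function of its input and an independent source of randomness: $f(r) = g(r, S)$ where $S$ is a random seed drawn from some distribution $\mu$ independent of $\mathcal{A}$. Conditioned on $S = s$, the map $r \mapsto g(r, s)$ is deterministic, so the previous case gives
\begin{equation}
\Pr[g(\mathcal{A}(H_1), s) \in O'] \le e^{\epsilon}\Pr[g(\mathcal{A}(H_2), s) \in O'] + \delta
\end{equation}
for every fixed $s$. Integrating both sides against $\mu$ and using independence of $S$ from $\mathcal{A}$ preserves the inequality, establishing $(\epsilon,\delta)$-DP for $f\circ\mathcal{A}$.

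The main obstacle, if any, is purely a measure-theoretic one: justifying that $f^{-1}(O')$ is measurable for the deterministic reduction and that the "independent seed" representation of $f$ exists in sufficient generality. In the histogram-release setting considered in \name, the output space is the discrete set $\mathbb{N}^{|\mathcal{V}|}$, so measurability is trivial and the seed construction is immediate; I would simply note these points rather than elaborate on them. No additional assumptions beyond those already in Definition~\ref{def:DP} are required.
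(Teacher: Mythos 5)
Your proof is correct: the reduction to deterministic $f$ via the preimage $O = f^{-1}(O')$, followed by conditioning on the post-processor's independent randomness and integrating, is the canonical argument for post-processing invariance (as in Dwork and Roth's monograph). The paper itself states this theorem without proof, citing it as a standard result from the differential privacy literature, so there is no in-paper argument to compare against; your treatment, including the remark that measurability and the seed representation are trivial over the discrete output space $\mathbb{N}^{|\mathcal{V}|}$ relevant to the histogram-release setting, is complete and needs no repair.
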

Another result is that the privacy of DP-mechanism can be amplified if it is preceded by a sampling step.
\begin{theorem} Let $\mathcal{A}$ be an $(\epsilon,\delta)$-DP algorithm and $\mathcal{D}$ is an input dataset. Let $\mathcal{A}'$ be another algorithm that runs $\mathcal{A}$ on a random subset of  $\mathcal{D}$ obtained by sampling it with probability $\beta$.  Algorithm $\mathcal{A}'$ will satisfy $(\epsilon',\delta')$-DP where $\epsilon'=ln(1+\beta(e^\epsilon -1))$ and $\delta'<\beta\delta$. \label{thm:sampling}
\end{theorem}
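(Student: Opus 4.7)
The plan is to prove the amplification bound via a coupling argument on the subsampling step, reducing the analysis of $\mathcal{A}'$ on two neighboring datasets to the guarantee of $\mathcal{A}$ on two sampled subsets that differ in at most one record. Fix neighboring datasets $\mathcal{D}$ and $\mathcal{D}' = \mathcal{D}\cup\{x\}$. Let $S$ denote the random subset obtained by including each element of $\mathcal{D}$ independently with probability $\beta$. For the sampling step on $\mathcal{D}'$, I would couple it with the sampling on $\mathcal{D}$ by reusing the same coin flips for the shared records and tossing one additional independent Bernoulli$(\beta)$ coin for $x$. With this coupling, the subsample drawn from $\mathcal{D}'$ equals $S$ with probability $1-\beta$ and equals $S\cup\{x\}$ with probability $\beta$.

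Next I would write, for an arbitrary measurable event $O$,
\begin{align*}
\Pr[\mathcal{A}'(\mathcal{D})\in O] &= \Pr[\mathcal{A}(S)\in O],\\
\Pr[\mathcal{A}'(\mathcal{D}')\in O] &= (1-\beta)\Pr[\mathcal{A}(S)\in O] + \beta\Pr[\mathcal{A}(S\cup\{x\})\in O].
\end{align*}
Since $S$ and $S\cup\{x\}$ are neighboring inputs to $\mathcal{A}$, the $(\epsilon,\delta)$-DP guarantee of $\mathcal{A}$ (applied pointwise to the inner randomness and then averaged over the sampling coins) gives $\Pr[\mathcal{A}(S\cup\{x\})\in O]\le e^\epsilon\Pr[\mathcal{A}(S)\in O]+\delta$. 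Substituting yields
\begin{align*}
\Pr[\mathcal{A}'(\mathcal{D}')\in O] &\le \bigl(1+\beta(e^\epsilon-1)\bigr)\Pr[\mathcal{A}(S)\in O] + \beta\delta\\
&= e^{\epsilon'}\Pr[\mathcal{A}'(\mathcal{D})\in O] + \beta\delta,
\end{align*}
which is the desired $(\epsilon',\delta')$-DP inequality with $\epsilon'=\ln(1+\beta(e^\epsilon-1))$ and $\delta'\le\beta\delta$. The reverse direction (removing $x$ instead of adding it) is handled by an analogous coupling, noting that sampling from $\mathcal{D}'$ and then discarding $x$ from the sample gives exactly the distribution of $S$, so the same inequality holds symmetrically.

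The only delicate point, and where I would be most careful, is the coupling itself: one must argue that $\mathcal{A}'(\mathcal{D}')$ is genuinely a $\beta$/$(1-\beta)$ mixture of $\mathcal{A}(S\cup\{x\})$ and $\mathcal{A}(S)$ for the \emph{same} distribution of $S$ used when evaluating $\mathcal{A}'(\mathcal{D})$. Once the shared-randomness coupling is in place, the rest is a one-line convex-combination calculation together with a direct application of Definition~\ref{def:DP}. The strict inequality $\delta'<\beta\delta$ in the statement follows because $\delta'$ is bounded by $\beta\delta$ while the mixture already assigns mass $1-\beta>0$ to an identically distributed term, so no slack from $\delta$ is ever needed on that branch.
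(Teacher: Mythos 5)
The paper itself does not prove this theorem; it is invoked as a standard ``privacy amplification by subsampling'' result from the differential privacy literature, and the appendix only proves Theorem~\ref{thm:topic_bound}. Your coupling argument is the standard proof of this fact, and the forward direction is correct and complete: decomposing $\mathcal{A}'(\mathcal{D}')$ as a $(1-\beta,\beta)$ mixture of $\mathcal{A}(S)$ and $\mathcal{A}(S\cup\{x\})$ over the shared sampling coins, applying the $(\epsilon,\delta)$ guarantee to the second branch, and collecting terms gives exactly $e^{\epsilon'}=1+\beta(e^\epsilon-1)$ and a $\beta\delta$ additive term.

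The one place you are too quick is the reverse direction. It is \emph{not} handled ``symmetrically,'' because the decomposition is asymmetric: $\mathcal{A}'(\mathcal{D})$ is the pure term $\mathcal{A}(S)$ while $\mathcal{A}'(\mathcal{D}')$ is the mixture, so to bound $\Pr[\mathcal{A}'(\mathcal{D})\in O]$ by $e^{\epsilon'}\Pr[\mathcal{A}'(\mathcal{D}')\in O]+\delta'$ you must \emph{lower}-bound the mixture. Concretely, the DP guarantee gives $\Pr[\mathcal{A}(S\cup\{x\})\in O]\ge e^{-\epsilon}\bigl(\Pr[\mathcal{A}(S)\in O]-\delta\bigr)$, hence
\begin{equation*}
\Pr[\mathcal{A}'(\mathcal{D}')\in O]\ \ge\ \bigl(1-\beta+\beta e^{-\epsilon}\bigr)\Pr[\mathcal{A}(S)\in O]-\beta e^{-\epsilon}\delta,
\end{equation*}
and you then need $\bigl(1-\beta+\beta e^{-\epsilon}\bigr)^{-1}\le 1+\beta(e^\epsilon-1)$, which follows from $\bigl(1-\beta+\beta e^{-\epsilon}\bigr)\bigl(1-\beta+\beta e^{\epsilon}\bigr)=(1-\beta)^2+\beta^2+\beta(1-\beta)(e^{\epsilon}+e^{-\epsilon})\ge 1$ since $e^{\epsilon}+e^{-\epsilon}\ge 2$; the residual $\delta$ term is likewise at most $\beta\delta$ because $e^{-\epsilon}\le 1-\beta+\beta e^{-\epsilon}$. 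With that computation supplied, your proof is complete. Your closing remark about strictness is a fair heuristic but the argument really establishes $\delta'\le\beta\delta$; the strict inequality in the statement is an artifact of how the result is usually quoted and nothing in the paper depends on it.
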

Additionally, we define a DP mechanism namely the truncated Laplace mechanism \cite{shrinkwrap} which is used in \name.
\begin{definition}[Truncated Laplace mechanism for histogram] Given a histogram $H$, the truncated Laplace mechanism, $Lp(\epsilon,\delta,d)$, adds a non-negative integer noise vector $[\max(\eta,0)]^{|\mathcal{V}|}$ to $H$, where $\eta $ follows a distribution, denoted by $L(\epsilon,\delta,d)$ with a p.d.f  $\Pr[\eta=x] = p \cdot e^{- (\epsilon/d)|x-\eta^0|}$, where $p = \frac{e^{\epsilon/d} - 1}{e^{\epsilon/d}+ 1}$ and $\eta_0 = -\frac{d \cdot \ln((e^{\epsilon/d} + 1)\delta)}{\epsilon} + d$.\end{definition}
\begin{theorem}The truncated Laplace mechanism satisfies $(\epsilon, \delta)$-DP for $d$-distant histogram releases~\cite{shrinkwrap}. \label{thm:truncated}\end{theorem}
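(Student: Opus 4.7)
The plan is to reduce the argument to the standard Laplace mechanism proof, absorbing the effect of the max-with-zero clipping into the $\delta$ slack of Definition~\ref{def:DP}. Because the noise is added coordinate-wise and i.i.d.\ across the vocabulary, $\Pr[\mathcal{A}(H) = y]$ factorizes over $w_i \in \mathcal{V}$. Writing $\Delta_i = h_2^{(i)} - h_1^{(i)}$, coordinates on which $H_1$ and $H_2$ agree contribute a factor of $1$ to the ratio, and the disagreeing coordinates satisfy $\sum_i |\Delta_i| = d$ by hypothesis.

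I would then partition the output space into a ``safe'' region $G$ and its complement. Concretely, $G$ collects those outputs $y$ such that, on every coordinate where $\Delta_i \neq 0$, the observed value is at least $\max(h_1^{(i)}, h_2^{(i)})$, so that the noise realised from both histograms avoids the clipping regime. On $G$ the per-coordinate ratio collapses to a pure discrete-Laplace ratio, and the reverse triangle inequality yields
\begin{equation}
\frac{\Pr[\mathcal{A}(H_1) = y]}{\Pr[\mathcal{A}(H_2) = y]} \leq \prod_i e^{(\epsilon/d)|\Delta_i|} = e^{\epsilon \sum_i |\Delta_i|/d} = e^{\epsilon},
\end{equation}
exactly as in the analysis of the ordinary Laplace mechanism with $\ell_1$-sensitivity $d$.

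The technical heart is then bounding $\Pr[\mathcal{A}(H_1) \in \bar G] \leq \delta$, which is precisely what pins down the form of $\eta_0$. A direct geometric-series computation on the discrete-Laplace tail gives, for any integer $k < \eta_0$, the closed form $\Pr[\eta \leq k] = e^{-(\epsilon/d)(\eta_0 - k - 1)}/(e^{\epsilon/d}+1)$; substituting $k = d-1$ together with $\eta_0 = d - (d/\epsilon)\ln((e^{\epsilon/d}+1)\delta)$ causes the right-hand side to collapse to exactly $\delta$. This handles the extremal case in which the entire $\ell_1$ budget $d$ is concentrated in a single coordinate. The main obstacle I expect is the multi-coordinate case: when the disagreement is spread across several coordinates, the per-coordinate bad probability drops to $\delta \cdot e^{-\epsilon(1 - |\Delta_i|/d)}$, and one must argue that the union of these events still fits inside the $\delta$ budget. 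I would combine the exponential decay with the $\ell_1$ constraint $\sum_i |\Delta_i| = d$ to control this sum; once $\Pr[\bar G] \leq \delta$ is in hand, pairing it with the ratio bound on $G$ recovers Definition~\ref{def:DP}.
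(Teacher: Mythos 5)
The paper never proves this theorem --- it is imported wholesale from the ShrinkWrap citation --- so there is no in-paper argument to compare against; your skeleton (coordinate-wise factorization, a good set on which the ratio is a pure discrete-Laplace ratio, the clipped region charged to $\delta$) is exactly the standard argument for the scalar case, and your tail computation is correct: $\Pr[\eta\le k]=e^{-(\epsilon/d)(\eta_0-k-1)}/(e^{\epsilon/d}+1)$, so $\Pr[\eta\le d-1]=\delta$, and when the entire $\ell_1$ budget sits in one coordinate the bad event has probability exactly $\delta$ and the proof closes.

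The step you deferred --- controlling the union of per-coordinate bad events when the disagreement is spread across coordinates --- is where the argument genuinely fails, and not for lack of a cleverer summation: the inequality you would need, $\sum_i \delta\, e^{-\epsilon(1-|\Delta_i|/d)}\le\delta$ subject to $\sum_i|\Delta_i|=d$, is simply false. Concretely, take $\epsilon=1$, $\delta=0.05$, $d=10$ and two coordinates with $\Delta_1=9$, $\Delta_2=1$ (both with $h_2^{(i)}>h_1^{(i)}$). The set $O=\{y: y_i<h_2^{(i)}\text{ for some }i\}$ has probability zero under $\mathcal{A}(H_2)$, so all of $\Pr[\mathcal{A}(H_1)\in O]$ must be absorbed by the additive $\delta$; but $\Pr[\mathcal{A}(H_1)\in O]=1-(1-\delta e^{-0.1})(1-\delta e^{-0.9})\approx 1.29\,\delta>\delta$. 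Hence no choice of good set can rescue the statement for general $d$-distant pairs with this $\eta_0$: the theorem as written holds only when the $\ell_1$ difference is concentrated in a single coordinate (the scalar setting the citation actually covers), and the histogram version requires a larger shift, e.g.\ replacing $\delta$ by $\delta/d$ (or $\delta/|\mathcal{V}|$) inside $\eta_0$, after which your union bound does close. A second, smaller issue: your good set admits $y_i=\max(h_1^{(i)},h_2^{(i)})$, but when $h_1^{(i)}>h_2^{(i)}$ that output carries the entire clipped mass $\Pr[\eta\le 0]$ in the numerator against the single atom $\Pr[\eta=|\Delta_i|]$ in the denominator, giving a ratio of $e^{-(\epsilon/d)|\Delta_i|}/(1-e^{-\epsilon/d})$, which exceeds $e^{(\epsilon/d)|\Delta_i|}$ whenever $\epsilon/d$ is small; you need $y_i\ge\max(h_1^{(i)},h_2^{(i)})+1$ on those coordinates, with the excluded point costing at most an extra $\Pr[\eta\le 0]$ per coordinate in the same (already overdrawn) $\delta$ budget.
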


\begin{figure*}[t]     
\centering
\begin{subfigure}{0.24\linewidth}
\vspace{0.1cm}
  \centering
  \includegraphics[width=\linewidth]{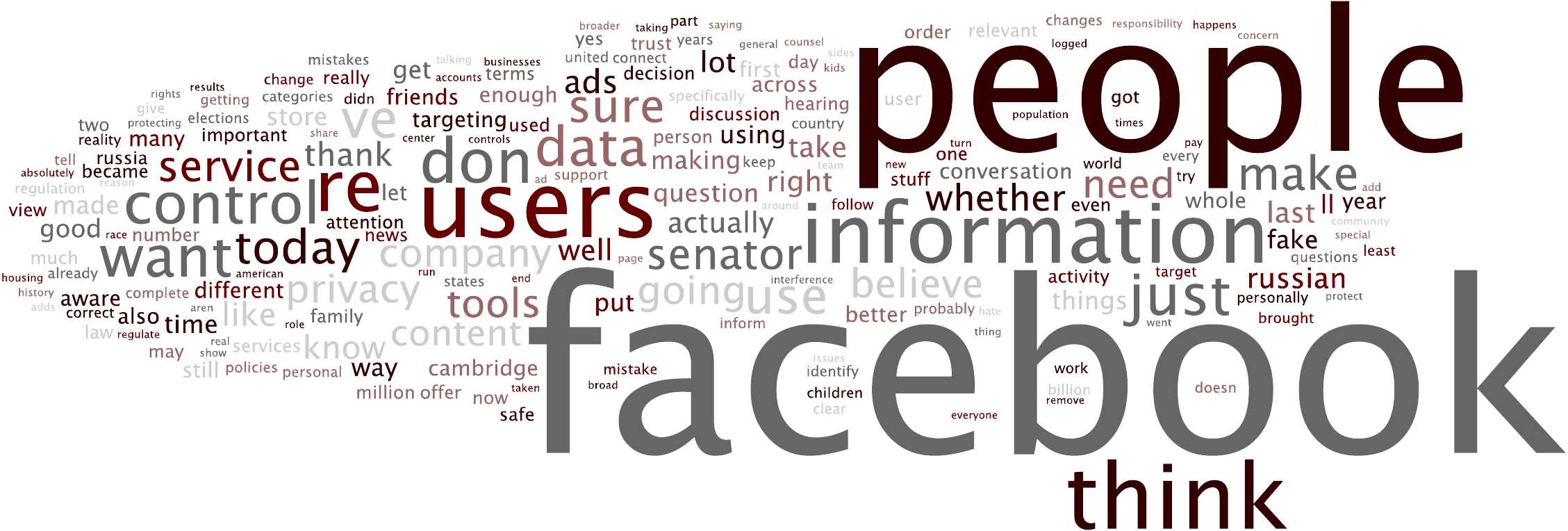}
  \vspace{0.1cm}
  \caption{Original}
  \label{fig:cloud_1} 
\end{subfigure}
\begin{subfigure}{0.24\linewidth}
  \centering
  \includegraphics[width=\linewidth]{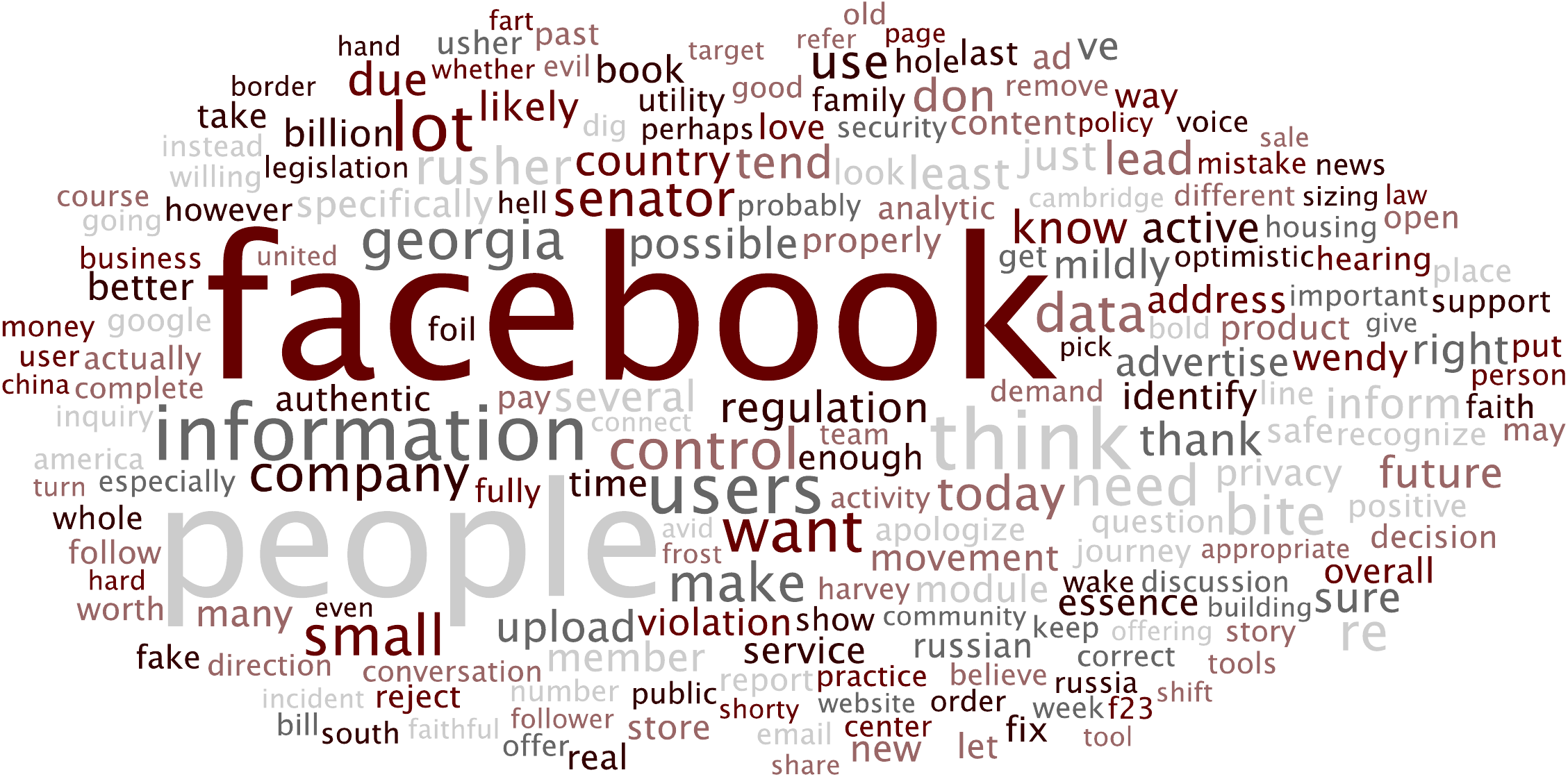}
  \caption{$\epsilon$=1, $\delta$=0.05, and $d$=2}
  \label{fig:cloud_2} 
\end{subfigure}
\begin{subfigure}{0.24\linewidth}
  \centering
  \includegraphics[width=\linewidth]{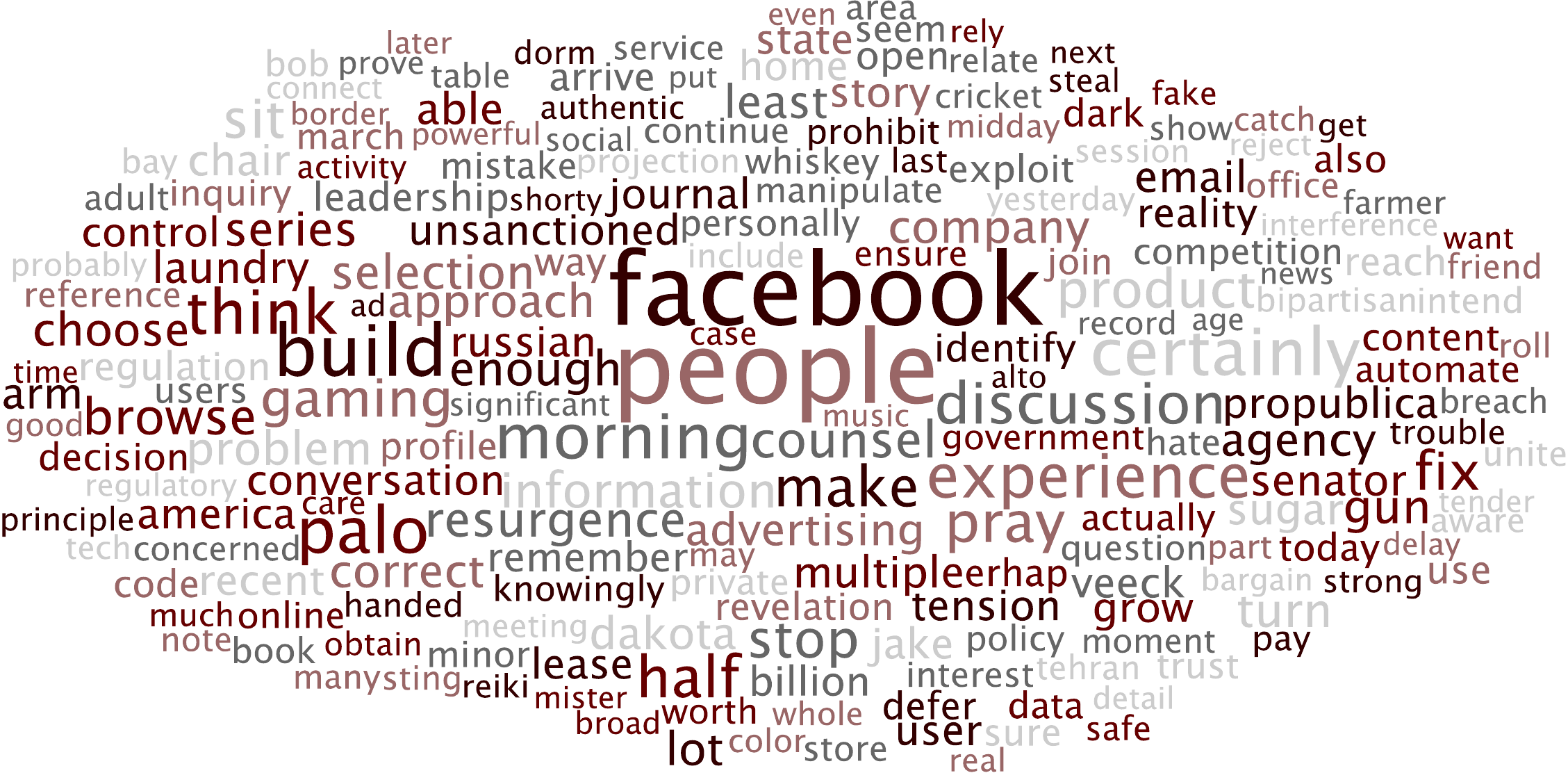}
  \caption{$\epsilon$=1, $\delta$=0.05, and $d$=5}
  \label{fig:cloud_3} 
\end{subfigure}
\begin{subfigure}{0.24\linewidth}
  \centering
  \includegraphics[width=\linewidth]{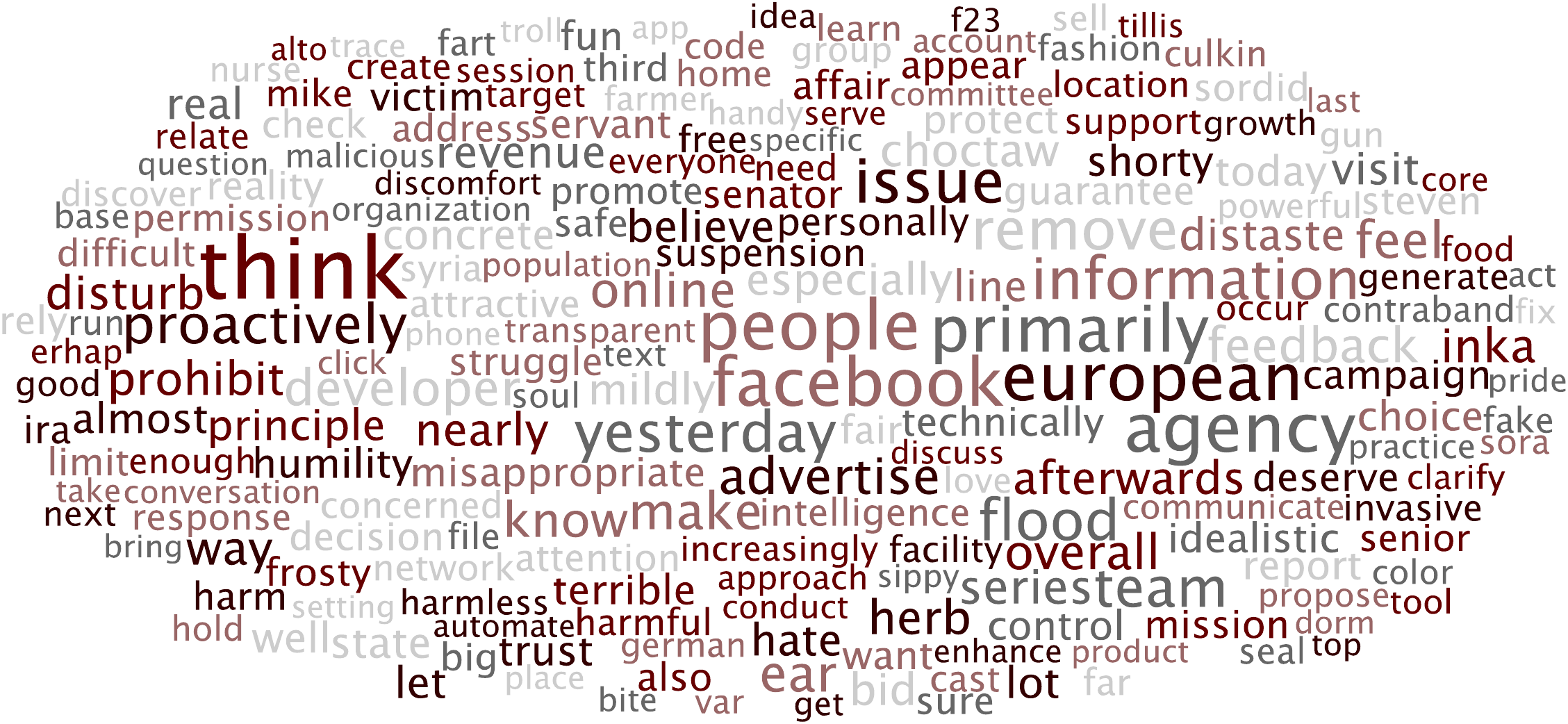}
  \caption{$\epsilon$=1, $\delta$=0.05, and $d$=10}
  \label{fig:cloud_4} 
\end{subfigure}
\caption{The word cloud of the Facebook dataset visualizing the histogram as it changes after adding different levels of noise.}
\label{fig:cloud_privacy}
\end{figure*}

Fig.~\ref{fig:cloud_privacy} visualizes the histogram of the Facebook dataset as a word cloud for different noise levels. As evident from the original word cloud, the histogram emphasizes few important words such as {\fontfamily{cmss}\selectfont Facebook}, {\fontfamily{cmss}\selectfont people}, {\fontfamily{cmss}\selectfont information}, and {\fontfamily{cmss}\selectfont users}. With increased value of $d$, the resulting histogram has a roughly uniform distribution of the included words.

\subsubsection{Discussion}\label{sec:DP:discussion}
\name's use of DP is different from the most standard use-case of DP (like numeric datasets). It deals with concrete units like words instead of numeric statistics -- introducing new challenges; we discuss these challenges and how \name circumvents them in this section.\\
\textbf{Vocabulary definition: }
 The foremost task for defining the word histogram is defining the vocabulary, $\mathcal{V}$. The most conservative approach to define $\mathcal{V}$ is to consider the total set of all English stemmed and non-stop words. Such a vocabulary would be prohibitively large for efficient and practical usage. However, note that such a definition of $\mathcal{V}$ is an overestimate as no real-world document would contain all possible English words. Recall that our objective of adding noise is to obfuscate any statistical analysis built on top of the document's $BoW$ (histogram), such as a topic modeling and stylometry analysis. Typically, $BoW$ based statistical analyses are concerned only with the set of most frequent words. For example, any standard topic model captures only the top  $m$ percentile most frequent words in a transcript \cite{steyvers2007probabilistic, Ramage:2009:LLS:1699510.1699543}. The same applies to stylometry analysis, which is based on measures of the unique distribution of frequently used words of different individuals.

Thus, as long as the counts of the most common words of the transcript are protected (via DP), the subsequent statistical model (like topic model) built over the word histogram will be privacy-preserving too (by Thm. \ref{thm:post}). {However, high-frequency words might not be the only ones that contain important information about $T_S$. To tackle this, we also include words with large Term Frequency-Inverse Document Frequency (\TFIDF) weight to our vocabulary. This weight is a statistical measure used to evaluate how significant a word is to a document relative to a baseline corpus. The weight increases proportionally to the number of times a word appears in the document but is offset by the frequency of the word in the baseline corpus. This offset adjusts for the fact that some words appear more frequently in general.  To this end, \name makes an estimate of the vocabulary from $T^{OSP}_S$. Although existing offline transcribers have high \WER, we found (empirically) that they can identify the set of domain words of $S$ with high accuracy (details in Sec. \ref{sec:eval:text}). For computing the TF-IDF values, IDF is computed using an external NLP corpus like Wikipedia articles. Thus formally, $\mathcal{V}=\{w| w \in \{$ top $m$ percentile of the most frequent words in $T^{OSP}_S\}\cup \{$ words with TF-IDF value $\geq \Delta \mbox{ in } T^{OSP}_S\}\}$. 
Note that $\mathcal{V}$ should be devoid of all sensitive words which are scrubbed off from $S$ in step 2 of Fig. \ref{fig:high-level}. Additionally, the vocabulary can be extended to contain \textit{out-of-domain} words, i.e., random English words that are not necessarily part of the original document. This helps in protecting against text classification attacks (Sec.~\ref{sec:eval:text}).} 

\paragraph{Specificities of the word histogram:} As discussed above, the goal of the DP mechanism is to generate noisy counts for each $w_i \in \mathcal{V}$. An artifact of our setting is that this noise has to be non-negative and integral. This is because dummy words (for the noisy counts) can only be added to $S$; removing any word from $S$ is not feasible as this would entail in recognizing the word directly from $S$, which would require accurate transcription. Hence, \name uses the truncated Laplace mechanism to ensure non-negative and integral noise. 

\paragraph{Setting privacy parameters:} 
The parameters $\epsilon$ and $\delta$ quantify the privacy provided by a DP-mechanism; lower the values higher is the privacy guarantee achieved. {The distance parameter $d$, intuitively, connects the privacy definition in the word histogram, which is purely a formal representation, to a semantic privacy notion. For example, it can quantify how much the noisy topic models computed by the \CSP (from $T^{CSP}_S$) should differ from that of $T^g_S$. Thus, the user can tune $d$ depending on the target statistical analysis.  In the following, we detail a mechanism, as a guide for the user, for choosing $d$ when the target statistical analysis is topic modeling. }
\par Let us assume that the user has a set of speech files $\{S_j\}$ to be transcribed. Let $D_j$ denote the ground truth transcript corresponding to speech file $S_j$. The objective is to learn $t$ topics from the corpus $\bigcup_j D_j$ with at least $k$ words per topic (a topic is a distribution over a subset of words from the corpus). Let $\mathcal{T}=\{\mathbb{T}_1,\cdots,\mathbb{T}_t\}$ represent the original topic model built on $\bigcup_j D_j=\bigcup_jT^g_{S_j}$  and $\mathcal{T}'=\langle \mathbb{T}'_1, \cdots, \mathbb{T}'_t\rangle$ represent the noisy topic model computed by the \CSP.

The following theorem (Thm.~\ref{thm:topic_bound}) provides a lower bound on the pairwise $\ell_1$ distance between the true and noisy topics as a function of the privacy parameters of the DP word histogram release mechanism (specifically, the term $C_{min}$ is a function of $(d, \epsilon, \delta)$).

\begin{theorem}
\label{thm:topic_bound}
For any pair of topics $(\mathbb{T},\mathbb{T}') \in \mathcal{T}\times \mathcal{T}'$, 
\begin{gather*}
{\textstyle
||T-T'||_1 \geq 2  \frac{1}{\Big(1-(t-1) \frac{k}{\max_j|D_j|}\Big)}  \Big(\frac{\mathcal{C}_{min}}{t} -\frac{1}{2} \Big(1-t \frac{k}{\max_j|D_j||}\Big)\Big),} 
\end{gather*} 

where $\mathcal{C}_{min}=min_{j,l}\Big\{\frac{v\cdot(|D_j|-|w_{l,j}|\omega_j)}{|D_j|\cdot(|D_j|+v\cdot \omega_j})\Big\}$, $|D_j|$  is the total number of words in $D_j$, $\omega_j$ is the total number of  unique words, $v$ is the variance of the distribution 
$Lp(\epsilon', \delta',d),$
$\delta'=\beta\delta$ and $|w_{l,j}|$ is the number of times the word 
 $w_l \in \mathcal{V}$ appears in  $D_j$.\label{thm:DP:topicModeling}
\end{theorem}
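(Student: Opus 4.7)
The plan is to track how the truncated Laplace noise propagates from per-word count perturbations up to topic-distribution deviations, and then convert this into a pairwise $\ell_1$ distance via a combinatorial argument over the $t$ topics. I would begin by representing each topic as a probability distribution over $\mathcal{V}$ and writing the per-document word probability $p_{l,j} = |w_{l,j}|/|D_j|$, which after the DP step becomes $p'_{l,j} = (|w_{l,j}| + \eta_l)/(|D_j| + \sum_{l'} \eta_{l'})$ with $\eta_l$ drawn from $L(\epsilon', \delta', d)$ of variance $v$ (invoking the sampling amplification of Thm.~\ref{thm:sampling} to obtain the primed parameters). Analyzing $|p'_{l,j} - p_{l,j}|$ by controlling the random denominator $|D_j| + \sum_{l'} \eta_{l'}$ by its expectation $|D_j| + v \omega_j$ (via concentration or a delta-method expansion around the mean) should precisely yield the expression $\mathcal{C}_{min} = \min_{j,l} v(|D_j| - |w_{l,j}|\omega_j)/[|D_j|(|D_j|+v\omega_j)]$ as a uniform lower bound on the noise-induced deviation per word.

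Next, I would distribute this per-word guarantee across the $t$ topics. Since each topic is a probability vector over $\mathcal{V}$ and the overall noise mass over the histogram is aggregated across all topics, an averaging argument gives each topic at least $\mathcal{C}_{min}/t$ of the perturbation mass in expectation. The subtraction $-\tfrac{1}{2}(1 - tk/\max_j|D_j|)$ is a correction for words not lying in any topic's support: since the $t$ topics collectively cover at most $tk$ words out of $\max_j|D_j|$, the fraction $1 - tk/\max_j|D_j|$ contributes zero to any topic-wise $\ell_1$ difference and must be subtracted from the aggregated bound.

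Finally, the scaling factor $1/(1 - (t-1)k/\max_j|D_j|)$ would come from normalizing for overlap between topics. Since any given topic can share up to $(t-1)k$ words with the remaining $t-1$ topics, the non-shared portion of its support has size at most a $1 - (t-1)k/\max_j|D_j|$ fraction of the document length; dividing by this factor rescales the bound to reflect the pairwise $\ell_1$ distance attributable to a single pair $(\mathbb{T},\mathbb{T}')$ rather than the aggregate. Combining these three pieces via the triangle inequality $\|\mathbb{T}-\mathbb{T}'\|_1 \geq \|\mathbb{T}-\mathbb{T}_i\|_1 - \|\mathbb{T}_i - \mathbb{T}'\|_1$ (applied to the matching true topic $\mathbb{T}_i$ of $\mathbb{T}'$) and multiplying by $2$ to symmetrize yields the stated bound.

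The main obstacle will be the first step: since $p'_{l,j}$ is a ratio of two random quantities, propagating variance through this non-linear normalization—while simultaneously accounting for the non-zero mean $\eta_0$ of the truncated Laplace (which introduces a systematic positive shift before the $\max(\eta,0)$ truncation)—demands careful approximation. A secondary technical difficulty is formalizing the pigeonhole/covering step rigorously when topics may share words, which is precisely what the two $k/\max_j|D_j|$ correction factors are engineered to absorb.
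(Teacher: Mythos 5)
Your high-level skeleton --- compare true and noisy word frequencies to obtain the per-word quantity $\mathcal{C}_{j,l}$, then spread it over the $t$ topics to get the $\mathcal{C}_{min}/t$ term --- matches the paper's first two moves (and the paper sidesteps your ``main obstacle'' entirely by deterministically modeling the noisy frequency as $(|w_{l,j}|+v)/(|D_j|+v\,\omega_j)$, i.e., plugging the variance $v$ in as the per-word noise rather than propagating a random denominator). But the remainder of your argument has a genuine gap: you never explain how to pass from a bound on the perturbation of the \emph{products} $p_{i,j}q_{i,l}$ (topic proportion times word-in-topic probability, which is what the generative identity $\sum_i p_{i,j}q_{i,l}=|w_{l,j}|/|D_j|$ controls) to a bound on $|q_{i,l}-q'_{i,l}|$ alone, which is what $\|\mathbb{T}-\mathbb{T}'\|_1$ measures. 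A large change in $p_{i,j}q_{i,l}$ is perfectly compatible with $q'_{i,l}=q_{i,l}$ if only the topic proportions move, so the averaging step by itself proves nothing about the topics. The paper closes this hole with the decomposition $|p'q'-pq|=|(q'-q)p'+q(p'-p)|$ together with its assumption that a topic present in a document contributes at least one count of each of its $\geq k$ words, which yields $k/|D_{max}|\le p_{i,j}\le 1-(t-1)k/|D_{max}|$. Those two bounds on the topic \emph{proportions} are exactly where the constants you tried to explain come from: the subtracted term $\tfrac12\bigl(1-tk/|D_{max}|\bigr)$ bounds the nuisance contribution $q_2(p_1-p_2)\le\tfrac12(p_{max}-p_{min})$, and the prefactor $1/\bigl(1-(t-1)k/|D_{max}|\bigr)$ is division by $p_{max}$ to isolate $q_1-q_2$. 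Your readings of these constants --- as corrections for words outside the topics' supports and for overlap of supports between topics --- are inconsistent with the definitions ($k$ is a \emph{lower} bound on the number of words per topic, and $|D_j|$ counts tokens rather than vocabulary items), and no covering argument of that form would reproduce the stated bound.

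Finally, the factor of $2$ is not a symmetrization via a triangle inequality over a matched topic; it is the identity $\|\mathbb{T}-\mathbb{T}'\|_1=2\,\delta_{TV}(\mathbb{T},\mathbb{T}')$ combined with $\delta_{TV}(\mathbb{T},\mathbb{T}')\ge\sup_w|\mathbb{T}(w)-\mathbb{T}'(w)|$, applied once the single-word lower bound on $q_1-q_2$ is in hand. The combination you sketch, $\|\mathbb{T}-\mathbb{T}'\|_1\ge\|\mathbb{T}-\mathbb{T}_i\|_1-\|\mathbb{T}_i-\mathbb{T}'\|_1$, would require an upper bound on one of the two distances, which none of your earlier steps provides.
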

The proof of this theorem and the descriptions of the parameters are presented in \ifpaper the full paper~\cite{FP} \else Appendix A\fi.  
\paragraph{Dummy word injection:}  As discussed earlier, achieving differential privacy requires adding dummy words to $S$. {\name generates the dummy text corpus using an NLP language model (Sec. \ref{sec:implementaton}). The model takes in a short text sample from the required topic and generates an entire document of any required length based on that input. In some scenarios, the user can also provide a corpus of non-publicly available documents with the same vocabulary. This scenario is valid in many practical settings. For instance, in an educational institution, the sensitive speech files requiring transcription might be the interviews/oral exams of the students conducted on a specific subject, and the noise corpus can be the lecture notes of the same subject. }

Next, \name generates a set of dummy segments, $\mathbb{S}_d$, from the dummy corpus above. Let us assume that each of the true segments contains at most $k$ non-stop words (depends on the segment length). 
\name ensures that each dummy segment also contains no more than $k$ non-stop words. Additionally, each such segment must contain only one word from the vocabulary $\mathcal{V}$. This means that although the physical noise addition is carried at the segment level, it is still equivalent to adding noise at the level of words (belonging to $\mathcal{V}$) as we only care about $w_i \in \mathcal{V}$. Each dummy segment is injected only once per \CSP.   
Since the dummy segments have to be added in the speech domain, \name applies \TTS transforms to the segments in $\mathbb{S}_d$ such that they have the same acoustic features as $\mathbb{S}$. This condition ensures that $\mathbb{S}_d$ are indistinguishable from $\mathbb{S}$ in terms of their acoustic features. {\name provides the user with two broad options to satisfy this condition --  voice cloning or voice conversion.}

\par Voice cloning is a \TTS system that generates speech in a target speaker voice. Given a speech sample from the target speaker, the system generates an embedding of the speaker's voice biometric features. It uses this embedding to synthesize new utterances of any linguistic content in the target speaker's voice. \name utilizes such a technology to clone the original speaker's voice and uses it to generate acoustically similar dummy segments $\mathbb{S}_d$. \name applies a state-of-the-art voice cloning system~\cite{google_speech_synthesis}, which generates a close-to-natural synthetic voice using a short ($\sim$ 5 sec.) target voice sample.

We evaluate this cloning system in Sec.~\ref{sec:voice_threat}, and the cloned samples are successfully identified as the true speakers. However, voice cloning does not protect the speakers' voice biometrics, \minrev{and can be potentially thwarted by a stronger adversary.} Hence, \name provides voice conversion (\VC) as a stronger privacy-preserving option for the user. \VC transforms the voice of a source speaker to sound like a target speaker. \name utilizes \VC to obfuscate the true speakers' voice biometrics as well as to mitigate the \DP noise indistinguishability concern by converting the true and dummy segments into a single target speaker voice (Sec. \ref{sec:VoiceConversion}). \minrev{We discuss the utility-privacy trade-offs of both options in Sec.~\ref{sec:evaluation}.}

\par It is important to note that the dummy segments do not affect the \WER of $T^{CSP}_S$.  It is so because \name can exactly identify all such dummy segments (from their timestamps) and remove them from $T^{CSP}_S$. Additionally, since the transcription is done one segment at a time, the dummy segments do not affect the accuracy of the true segments ($\mathbb{S}$) either. Segmentation and voice conversion are the culprits behind the \WER degradation, as will be evident in Sec. \ref{sec:evaluation}.  Thus in \name,  the noise (in the form of dummy segments) can ensure differential privacy without affecting the utility. This is in contrast to standard usage of differential privacy for releasing numeric statistics where the noisy statistics result in a clear loss of accuracy. However, the addition of the dummy segments in \name does increase the monetary cost of using the online service that has to transcribe more speech data than needed. We analyze this additional cost in Sec.~\ref{sec:evaluation}. 
\par In practice, we have multiple well-known cloud-based transcription services with low \WER like Google Cloud Speech-to-Text, Amazon Transcribe, etc. \name uses them to its advantage in the following way.  \name splits the set of segments $\mathbb{S}$ into $N$ different sets (step 3 in Sec.~\ref{sec:DPmechanism}) $\mathbb{S}_i, i \in [N]$ where $N$ is the number of \CSPs with low \WER. Then, \name sends each subset to a different \CSP (after adding suitable noise segments to each set and shuffling them). Since each engine is owned by a different, often competing corporation, it is reasonable to assume that the \CSPs are \textit{non-colluding}.   
Thus, assuming that each segment contains at most one word in $\mathcal{V}$,  each subset of segments $\mathbb{S}_i$ can be viewed as randomly sampled sets from $\mathbb{S}$ with sampling probability $\beta =1/N$. From Thm.~\ref{thm:sampling}, this partitioning results in a privacy amplification. 

\subsubsection{Mechanism}
\label{sec:DPmechanism}

We summarize the \DP mechanism by which \name generates the dummy segments for $S$.  The inputs for the mechanism are (1) $\mathbb{S}$ -- the short segments of the speech file $S$, (2) the privacy parameters $\epsilon$ and $\delta$ and (3) $N$ -- the number of non-colluding \CSPs  to use. This mechanism works as follows:
\squishlist
\itemsep-0.3em 
    \item Identify the vocabulary $\mathcal{V}=\{w| w \in \{$ top $m$ percentile of the most frequent words in $T^{OSP}_S\}\cup \{$ words with TF-IDF value $\geq \Delta \mbox{ in } T^{OSP}_S\}\}$ through running an offline transcriber over $S$.
    \item Tune the value of $d$ based on the lower bound from Thm.~\ref{thm:topic_bound}, $\epsilon$ and $\delta$.
    \item Generate $N$ separate noise vectors, $\eta_i \sim [Lp((\ln(1 + \frac{1}{\beta} (e^\epsilon-1)), \beta\delta,d)]^{|\mathcal{V}|}, i \in [N]$. Thus for every partition $i$, \name associates each word in $\mathcal{V}$ with a noise value, a non-negative integer.
    \item From the \NLP generated text, extract all the text segments that contain words from $\mathcal{V}$. For each partition $i$, sample the text segments from this corpus to match the noise vector $\eta_i$. This is the set of noise (dummy) segments for partition $i$, $\mathbb{S}_{d,i}$. Iterate on generating text from the \NLP language model until the required noise count is satisfied. 
    \item Randomly partition $\mathbb{S}$ into $N$ sets $\mathbb{S}_i, i \in [N]$ where $\mbox{Pr[segment $s$ goes to partition $i$}]= \beta=1/N, s \in \mathbb{S}$.
    \item For each partition $i \in [N]$, shuffle the dummy segments in $\mathbb{S}_{d,i}$ (after applying \TTS and \VC) with  the segments in $\mathbb{S}_i$ (after applying \VC), and send it to the $\textsf{CSP}_i$.
\squishend

The first 4 steps in the above mechanism are performed in stage 3 in \name (Fig. \ref{fig:high-level}) while steps 5-6 are performed in stage 6.
\begin{theorem}Any topic model computed by $CSP_i, i \in [N]$ from $T^{CSP_i}_S$ is $(\epsilon,\delta)$-DP. \end{theorem}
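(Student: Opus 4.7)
The plan is to combine three ingredients stated earlier in the paper: Thm.~\ref{thm:truncated} (the truncated Laplace mechanism is $(\epsilon'',\delta'')$-DP for $d$-distant histograms), Thm.~\ref{thm:sampling} (privacy amplification by sampling), and Thm.~\ref{thm:post} (closure under post-processing). Since the $N$ \CSPs are assumed non-colluding, the adversarial view at $CSP_i$ is limited to its own partition, so it suffices to argue $(\epsilon,\delta)$-DP partition-by-partition; the statement for an arbitrary topic model then follows by post-processing.

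First I would reduce the claim to a statement about the noisy word histogram $\tilde H_i$ on $\mathcal{V}$ induced by the segments that $CSP_i$ actually receives. By construction (step 4 of the mechanism in Sec.~\ref{sec:DPmechanism}, and the short-segment property from Sec.~\ref{sec:Segmentation}), every true or dummy segment contributes at most one $\mathcal{V}$-word, so the map from the segment bundle to $\tilde H_i$ is well-defined, and any transcript $T^{CSP_i}_S$ produced by $CSP_i$ is a (possibly randomized) function of these segments. Consequently, any topic model, and indeed any subsequent \NLP statistic computed from $T^{CSP_i}_S$, factors through $\tilde H_i$ up to internal randomness of the \CSP, and Thm.~\ref{thm:post} reduces the claim to showing that the release $\tilde H_i$ itself is $(\epsilon,\delta)$-DP with respect to $d$-distant neighbors of the true histogram $H$.

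Next I would analyze $\tilde H_i$ as the composition of a sampling step and a Laplace release. Step 5 of the mechanism places each segment of $\mathbb{S}$ into partition $i$ independently with probability $\beta = 1/N$; because each segment carries at most one $\mathcal{V}$-word, this segment-level Bernoulli sampling pulls back to an entrywise Bernoulli sampling of $H$ with rate $\beta$, yielding a sub-histogram $H_i$. Step 3 then adds the truncated Laplace vector $\eta_i \sim [Lp(\epsilon'',\delta'',d)]^{|\mathcal{V}|}$ with $\epsilon'' = \ln\bigl(1+\tfrac{1}{\beta}(e^\epsilon-1)\bigr)$ and $\delta'' = \beta\delta$, realized physically in step 4 by injecting that many dummy segments. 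By Thm.~\ref{thm:truncated} the release $H_i \mapsto H_i + \eta_i = \tilde H_i$ is $(\epsilon'',\delta'')$-DP for $d$-distant histograms.

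Finally I would invoke Thm.~\ref{thm:sampling} on the composite pipeline (Bernoulli sampling at rate $\beta$ followed by the $(\epsilon'',\delta'')$-DP truncated Laplace release). The amplified parameters come out to $\ln(1+\beta(e^{\epsilon''}-1)) = \ln(1+(e^\epsilon-1)) = \epsilon$ and $\beta\delta'' = \beta^2\delta \leq \delta$, so $\tilde H_i$ is $(\epsilon,\delta)$-DP; post-processing via Thm.~\ref{thm:post} extends this guarantee to any topic model computed by $CSP_i$. The main obstacle I anticipate is the bookkeeping that legitimates the two ``pull-backs'': from per-segment Bernoulli partitioning to per-word Bernoulli sampling, and from physical dummy-segment injection to an additive non-negative integer noise vector on the histogram. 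Both rely on the invariant that each segment carries at most one $\mathcal{V}$-word (guaranteed by the segmentation and by \name's explicit cap of one vocabulary word per dummy segment), and on the fact that the noise realization dictates the exact number of dummy segments per $w_i \in \mathcal{V}$ that step 4 must sample; once that invariant is nailed down, the three theorems above chain together cleanly to yield the result.
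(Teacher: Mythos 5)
Your proposal is correct and follows essentially the same route as the paper's proof, which simply chains Thm.~\ref{thm:truncated} (truncated Laplace with the inflated parameters $\epsilon''=\ln(1+\tfrac{1}{\beta}(e^\epsilon-1))$, $\delta''=\beta\delta$), Thm.~\ref{thm:sampling} (amplification by the $\beta=1/N$ partitioning, recovering $(\epsilon,\delta)$), and Thm.~\ref{thm:post} (post-processing to the topic model). Your write-up is in fact more careful than the paper's two-sentence argument, since you make explicit the one-$\mathcal{V}$-word-per-segment invariant that justifies treating segment-level partitioning as word-level subsampling and dummy-segment injection as additive histogram noise.
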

\begin{proof}From Thm.~\ref{thm:sampling} and Thm.~\ref{thm:truncated}, we conclude that the word histogram $\tilde{H}_i$ computed from $T^{CSP_i}_S$ is $(\epsilon,\delta)$ - DP for distance $d$. Thm.~\ref{thm:post} proves that the topic model from $\tilde{H}_i$  is still $(\epsilon,\delta)$-DP as it is a post-processing computation. \end{proof}

{\subsubsection{Novelty of \name's Use of Differential Privacy }
\label{sec:DPnovelty}
Here, we summarize the key novelty in \name's use of \DP:}\\
(1) {Typically, DP is applied to statistical analysis of numerical data where "noise" corresponds to numeric values. In contrast, in \name, "noise" corresponds to concrete units -- words. To tackle this challenge, we applied a series of operations  (segmentation, shuffling, and partitioning) to transform the speech transcription into a $BoW$ model, where the \DP guarantee can be achieved. Moreover, the noise addition has to be done in the speech domain. This constraint results in new challenges: the lack of a priori access to the word histogram domain $\mathcal{V}$, and generating indistinguishable dummy speech segments.}\\
(2) {In our setting, the use of a \DP mechanism does not introduce a privacy-utility trade-off from the speech transcription standpoint.  \minrev{\name performs transcription one segment at a time. It keeps track of the timestamps of the dummy segments and completely removes their corresponding text from the final transcription (Sec. \ref{sec:end-to-end}). This filtration step is achievable in \name, unlike numeric applications of \DP, because of the atomic nature of transcription.} However, the dummy segments increase the monetary cost of transcription, resulting in a privacy-monetary cost trade-off as shown in Table~\ref{tab:noise_cost}. To tackle this issue, \name takes advantage of the presence of multiple \CSPs (Sec. \ref{sec:DP:discussion}). Thus, the idea of utilizing multiple \CSPs for cost reduction (Thm. \ref{thm:sampling}) is a novel contribution.}\\
(3) We introduce an additional parameter $d$, the distance between the pair of histograms, in our privacy definition (Defn.~\ref{def:DP}). Intuitively,  $d$ connects the privacy definition in the word histogram model, which is purely a formal representation, to a semantic privacy notion (e.g., $\ell_1$ distance between true and noisy topic models, Thm.~\ref{thm:DP:topicModeling}) as shown in Fig.~\ref{fig:heatmap} and \ref{fig:topic_distance}. \minrev{This contribution builds on ideas like group privacy \cite{dwork2014algorithmic} and generalized distance metrics \cite{generaliseddist}.} 

\vspace{-0.2cm}
{\subsubsection{Control Knobs} \label{sec:DP:control_knobs}
The construction of the DP word histogram provides the user with multiple control knobs for customization:}

\noindent
{\textbf{Parameter $d$:} According to Def. \ref{def:DP}, from $\tilde{H}$ the $CSP$ will not be able to distinguish between $T^{CSP}_S$ and any other transcript that differs from $T^{CSP}_S$ in $d$ words from  $\mathcal{V}$. Thus, higher the value of $d$, larger is the $\epsilon$-indistinguishability neighborhood for $\tilde{H}$ and hence, better is the privacy guarantee. But it results in an increased amount of noise injection (hence, increased monetary cost -- details in Sec.~\ref{sec:eval:text}).}

\noindent
{\textbf{Vocabulary:} The size of $\mathcal{V}$ is a control knob, specifically, the parameters $m$ and $\Delta$ and the number of \textit{out-of-domain} words. The trade-off here is: the larger the size of $\mathcal{V}$, the greater is the scope of the privacy guarantee. However, the noise size scales with $|\mathcal{V}|$ and hence incurs higher cost (details in Sec. ~\ref{sec:eval:text}).}

\noindent

\noindent
{\textbf{Voice transformation for noisy segments:} 
\name provides two options for noise synthesis -- voice cloning and voice conversion. { Voice cloning does not affect the transcription utility, measured in \WER, because it does not apply any transformations on the original speaker's voice}. However, it fails to protect the sensitive biometric information in $S$. \minrev{Moreover, there is no guarantee that a strong adversary cannot develop a system that can distinguish the cloned speech segments from the original ones. This puts \name's effectiveness at the risk of the arms race between the voice cloning system's performance and the adversary's strength.} This limitation is addressed by voice conversion at the cost of transcription utility. We quantify these utility-privacy trade-offs in Sec.~\ref{sec:evaluation}. }

\noindent
{\textbf{Number of \CSPs used for transcription:} As discussed above, employing multiple \CSPs lowers the monetary cost incurred. However, as shown in Table~\ref{tab:WER_noVC}, AWS has a  higher \WER than Google. Hence, using both the \CSPs results in lower overall utility than just using Google's cloud service. 
}

\label{sec:topicModel}

\subsection{Voice Conversion}\label{sec:VoiceConversion}
\begin{figure}
    \centering
    \includegraphics[width=\columnwidth]{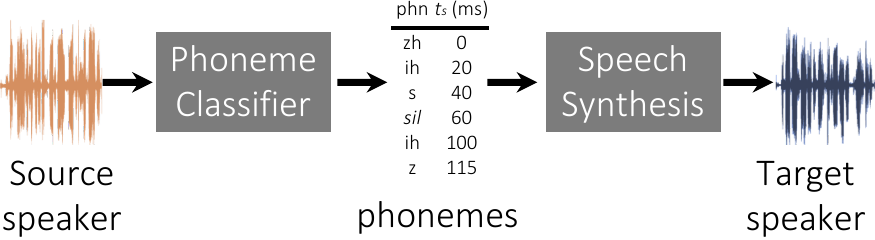}
    \caption{An illustration of the many-to-one \VC pipeline.}
    \label{fig:VC many-to-one}
\end{figure}

Below, we discuss the two main categories of \VC systems, highlighting their privacy-utility trade-offs.

\subsubsection{One-to-One Voice Conversion} 
\label{sec:one-to-one VC}
One-to-one \VC maps a predefined source speaker voice to a target speaker voice. 
In \name, we use sprocket~\cite{kobayashi2018sprocket}, which is based on spectral conversion using a Gaussian mixture model (GMM).
Sprocket's training phase takes three steps: (1) acoustic features extraction of the source and target speakers samples, (2) time-alignment of the source and target features, and (3) GMM model training. During conversion, sprocket extracts the acoustic features of the new utterances, converts them using the learned GMM model, and generates the target waveform. \name applies sprocket to convert the voice of all source speakers, including the synthesized dummy segments, into the \textit{same target speaker voice}. 

\subsubsection{Many-to-One Voice Conversion}
For perfect voice privacy, the \VC system should 
(1)  map any voice (even if previously unseen) to the same target voice, (2)  not leak any distinguishing acoustic features, and (3) operate on speech containing multiple speakers. To this end, \name deploys the two-stage many-to-one \VC ~\cite{VC} mechanism. As shown in Fig.~\ref{fig:VC many-to-one}, the first stage is a phoneme classifier that transfers the speech utterance into phonetic posterior grams (PPG) matrix. A PPG is a time-aligned phonetic class ~\cite{VC}, where a phoneme is the visual representation of a speech sound. Thus, the phoneme classifier removes the speaker-identifying acoustic features by mapping the spoken content into speaker-independent labels. In the second stage, a speech synthesizer converts the PPGs into the target voice.

The PPGs intermediate stage is irreversible and speaker-independent. It guarantees that the converted dummy segments $\mathbb{S}_d$ and converted original segments $\mathbb{S}$ cannot be distinguished from each other.
However, the actual implementation of the system carries many challenges. The first stage is a performance bottle-neck as it needs large phonetically aligned training data to generalize to new unseen voices. We overcome this challenge by generating a custom training speech dataset with aligned phonemes as described in Sec.~\ref{sec:implementaton}.

{\subsubsection{Control Knobs } \label{sec:VC:control_knobs}
The aforementioned \VC techniques present an interesting utility-usability-privacy trade-off. The one-to-one \VC technique gives better accuracy than many-to-one \VC since it is trained for a specific predefined set of source speakers (details in Sec. \ref{sec:Q4_utility_privacy}). However, this utility gain comes at the price of usability and privacy.   First, unlike many-to-one \VC, sprocket needs parallel training data -- a set of utterances spoken by both the source and target speakers. Hence, it requires an enrollment phase to get the source speaker's voice samples, thereby limiting the scalability of \name for previously unseen speakers. Second, one-to-one \VC does not provide perfect indistinguishability. These two limitations are mitigated by applying many-to-one \VC (Sec. \ref{sec:Q4_utility_privacy}).}

{\section{End-to-End Threat Analysis}
\label{sec:ThreatModel}
In this section, we go over the end-to-end system design of \name and identify potential privacy vulnerabilities.\\\\
\textbf{Voice Privacy:} Many-to-one \VC removes all the identifying features from $S$, like the speakers' voices, background noise, and recording hardware, thereby protecting voice privacy. } \\\\
{\textbf{Textual Privacy:} For sensitive word scrubbing, the best-case scenario from a privacy point of view is to have the user spell out the entire keyword list. However, due to its high usability overhead, \name uses \NER instead to identify named entities automatically from $T^{OSP}_S$. In Sec.~\ref{sec:eval:sws}, we empirically show that \name can achieve near-perfect true positive rate in identifying the segments containing sensitive words. However, this is only an empirical result and is dataset dependent.} \par 
Our main defense against statistical analysis on the text is the \DP guarantee on the word histogram.  This \DP guarantee would break down if the adversary can distinguish the dummy segments from the true segments. Many-to-one \VC technique, by design, ensures that both sets of segments have the same acoustic features. \minrev{However, the possibility of distinguishing them based on their textual features still remains. To address this threat, we rely on state-of-the-art \NLP models with low perplexity (log-likelihood) scores to generate the dummy text corpus. The low perplexity scores ensure that the auto-generated text is as close as possible to the natural language generated by humans~\cite{hofmann2013probabilistic, gpt2}. 
Although there is no formal guarantee about the adversary's ability to distinguish dummy and true segments based on their textual features, we have empirically analyzed this threat in Sec.~\ref{sec:eval:statisticalAnalysis} and Sec.~\ref{sec:indistinguishability_Dummy}.
}
We leverage state-of-the-art \NLP techniques to mount attacks on the dummy segments. Our results show that the adversary fails to distinguish between the dummy and true segments. However, the extent of such robustness is based on the efficacy of state-of-the-art \NLP techniques. 
\par Word correlations can also weaken the DP guarantee ($d-w$, if $w$ is the maximum size of word groups with high correlation). This can be addressed by either increasing $d$ or considering $n$-gram $(n=w)$ word histograms. However, this would increase the requisite amount of dummy segments.

\par  {Long segments can also be a source of privacy vulnerability as each segment contains more contextual information.
Hence, in the prototype \name presented in the paper, we use short segments that contain at most two non-stop words. 
}
\par Another weakness is related to vocabulary estimation, especially if some of the distribution-tail words are deemed to be sensitive. \minrev{\name provides no formal guarantees on the words that do not belong to $\mathcal{V}$.}  Although our empirical evaluation shows that the \OSP has a very high accuracy for the weighted estimation of $\mathcal{V}$ (Sec.~\ref{sec:DPAnalysis}), some sensitive distribution-tail words might still be missed due to the \OSP's transcription errors.  \minrev{Additionally, our formal DP guarantee holds only for the word histogram ($BOW$) on $\mathcal{V}$.
Textual analysis models other than $BOW$ are empirically evaluated in Sec.~\ref{sec:eval:statisticalAnalysis} and Sec.~\ref{sec:indistinguishability_Dummy}.}
\par
{Finally,  if the \CSP can reorder the segments (even partially since the speech file it receives contains dummy segments as well), it will be able to distinguish the dummy segments from the true ones and hence, learn the textual content of the file.  For this again, we show empirically that current \NLP techniques fail to reorder the segments (Sec.~\ref{sec:indistinguishability_Dummy}) even in the worst-case setting where all the segments go to one \CSP. However, as before, this is an empirical result only.}

\paragraph{Formal Privacy Guarantee:}
\textit{For a speech file $S$, \name provides perfect voice privacy (when using many-to-one \VC) and an $(\epsilon,\delta)$-\DP guarantee on the word histogram for the vocabulary considered ($BOW$), under the assumption that the dummy segments are indistinguishable from the true segments.}

\section{Implementation }
 
\label{sec:implementaton}

In this section, we describe the implementation details of \name's building blocks (shown in Fig.~\ref{fig:high-level}).

\paragraph{Segmentation:}
We implement the two-level hierarchical segmentation algorithm described in Sec.~\ref{sec:Segmentation}. The silence detection based segmentation is implemented using the Python pydub package\footnote{https://pypi.org/project/pydub/}. 
We used Praat\footnote{http://www.fon.hum.uva.nl/praat/} to extract the pitch information required for the second level of the segmentation algorithm.

\paragraph{Sensitive Keyword Scrubbing:}
 
{We use the NLP Python framework spaCy~\footnote{https://github.com/explosion/spaCy} for named entity recognition (NER) from the text. 
The keyword lists per each dataset can be found in \ifpaper the full paper \cite{FP} \else Appendix B\fi.}
We employ PocketSphinx\footnote{https://github.com/cmusphinx/pocketsphinx} for keyword spotting, \minrev{a lightweight \ASR that can detect
keywords from continuous speech.  It takes a list of words (in the text) and their respective sensitivity thresholds and returns segments that contain speech matching the words. PocketSphinx is a generic system that can detect any keyword specified in runtime; it is not trained on a pre-defined list of keywords and requires no per-user training or enrollment.}

\paragraph{Generating Dummy Segments:}
{
We use the open source implementation~\footnote{https://github.com/huggingface/transformers} of OpenAI's state-of-the-art NLP language model, $GPT2$\cite{gpt2}, to generate the noise corpus. 

Using this predictive model, we generate a large corpus representing the vocabulary of the evaluation datasets. An example of the generated text is available in \ifpaper the full paper \cite{FP} \else  Appendix B\fi.} 
To generate the dummy segments, we segment each document at the same level as the speech segmentation algorithm. We build a hash table associating each vocabulary word with the segments that contain it. \name uses a dummy segment only once per \CSP to prevent it from identifying repetitions.

\paragraph{Text-to-Speech:}
We use the multi-speaker (voice cloning) \TTS synthesizer~\cite{google_speech_synthesis} to generate the speech files corresponding to the dummy segments. We use a pre-existing system implementation and pretrained models~\footnote{https://github.com/CorentinJ/Real-Time-Voice-Cloning}. 

\noindent
{\paragraph{One-to-One Voice Conversion:}
We use the open-source sprocket software~\footnote{https://github.com/k2kobayashi/sprocket}. As described in Sec.~\ref{sec:one-to-one VC}, sprocket requires a parallel training data and the target voice should be unified for all source speakers. For the VCTK datasets, we use speaker p306 as the target voice. Since we also evaluate \name on non-standard datasets (Facebook and Carpenter cases), we had to construct the parallel training data for their source speakers. For this, we use \TTS to generate the required target voice training utterances in a single \textit{synthetic} voice.}

\paragraph{Many-to-One Voice Conversion:}
We utilize pre-existing architectures and hyperparameters~\footnote{https://github.com/andabi/deep-voice-conversion} for the two-stage many-to-one \VC ~\cite{VC} mechanism, shown in Fig.~\ref{fig:VC many-to-one}. The first network, $net_1$, is trained on a set of $\{$raw speech, aligned phoneme labels$\}$ samples from a multi-speaker corpus, where the labels are the set of 61 phonemes from the TIMIT dataset. The only corpus that has a manual transcription of speech to the phonemes' level is the TIMIT dataset -- a limited dataset. We found that training $net_1$ on TIMIT alone results in an inferior \WER performance. For better generalization, we augment the training set by automatically generating phoneme-aligned transcriptions of standard \ASR corpora. We use the Montreal Forced Aligner~\footnote{https://montreal-forced-aligner.readthedocs.io/en/latest/} to generate the aligned phonemes on LibriSpeech and TED-LIUM~\cite{rousseau2012ted} datasets.
The second network, $net_2$, synthesizes the phonemes into the target speaker's voice. It is trained on a set of $\{$PPGs, raw speech$\}$ pairs from the target speaker's voice. We use the \textit{trained} $net_1$ to generate the PPGs data for training $net_2$. As such, we only need speech samples of the target speaker to train $net_2$. This procedure also allows $net_2$ to account for $net_1$'s errors. We use Ljspeech\footnote{https://keithito.com/LJ-Speech-Dataset/} as the target voice for its relatively large size -- 24 hours of speech from a single female.

\section{Evaluation}
\label{sec:evaluation}

\begin{table*}[t]
\begin{minipage}[b]{0.60\linewidth}\centering
\scalebox{0.95}{
\begin{tabular}{@{}lccccc@{}}\toprule
\textbf{Datasets} & {\textbf{Cloning}} & {\textbf{One-to-One}} & \textbf{Many-to-One} & \textbf{\OSP}\\ 
\midrule
VCTK p266 & {5.15 (80.73\%)}  & {16.55 (38.06\%)}  & 21.92 (17.96\%) & 26.72\\
VCTK p262 & {4.53 (71.63\%)}& {7.39 (53.73\%)} & 10.82 (32.25\%) & 15.97\\
\midrule
Facebook1 & {8.26 (66.59\%)} & {14.60 (40.94\%)} & 20.30 (17.88\%) & 24.72\\
Facebook2 & {9.75 (63.36\%)} & {18.27 (31.34\%)} & 19.44 (26.94\%) & 26.61\\
Facebook3 & {14.93 (51.40\%)}& {23.25 (24.32\%)} & 27.06 (11.91\%) & 30.72\\
\midrule
Carpenter1 & {14.43 (44.18\%)} & {23.88 (7.62\%)} & 22.63 (12.46\%) & 25.85\\
Carpenter2 & {13.53 (65.93\%)} & {33.71 (15.11\%)} & 38.90 (2.04\%) & 39.71\\
\bottomrule
\end{tabular}}
\caption{WER (\%) of end-to-end \name which represents the accumulative effect of segmentation, \SWS, and different settings of voice privacy and its relative improvement in (\%) over \OSP (Deep Speech).}  
\label{tab:WER_VC_4levels}
\end{minipage}
\hfill
\begin{minipage}[b]{0.37\linewidth}
\centering
\includegraphics[width=0.95\textwidth]{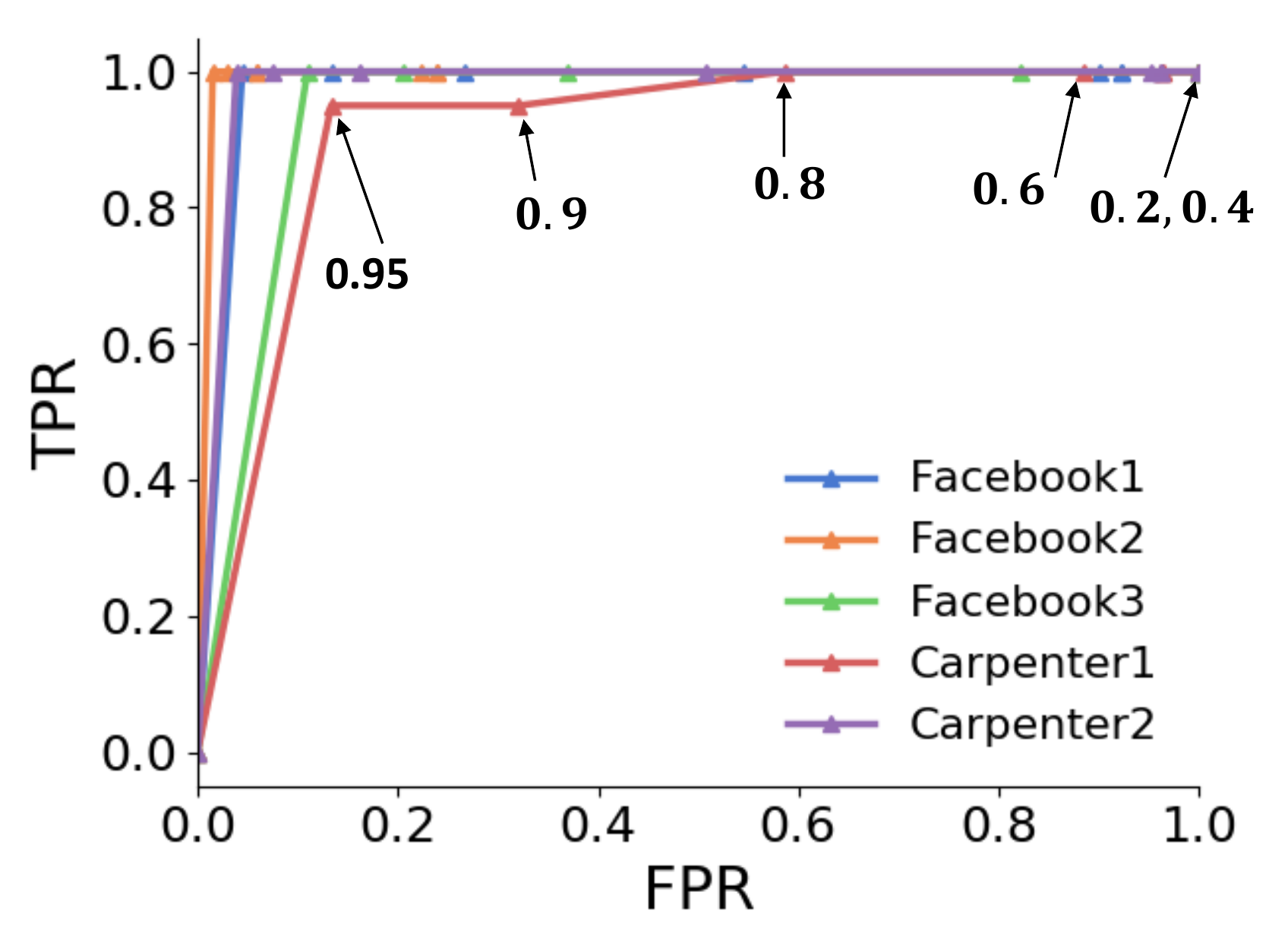} \vspace{-2pt}
\captionof{figure}{{ROC curve for sensitive words detection at different values of the sensitivity score.}}
\label{fig:KWS}
\end{minipage}
\end{table*}

We evaluate how well \name meets the design objectives of Sec.~\ref{sec:OurSys}. Specifically, we aim to answer the following questions: (\textit{Q1.}) Does \name preserve the transcription utility?\\ (\textit{Q2.}) Does \name protect the speakers' voice biometrics? \\(\textit{Q3.}) Does \name protect the textual content of the speech? \\ {(\textit{Q4.}) Does the different control knobs provide  substantial flexibility in the utility-usability-privacy spectrum?}  

{We answer the first three questions for a prototype implementation of \name that provides the maximum degree of formal privacy and hence, the least utility. For evaluating \textit{Q4}, we relax the privacy guarantee to obtain utility and usability improvements.}

{\paragraph{Prototype \name:} For the prototype \name presented in the paper:  (1) segmentation length is adjusted to ensure that each segment contains at most two non-stop words (2) noisy segments are generated via the GPT2  language model 
(3) a single \CSP (Google) is utilized (4) many-to-one \VC is applied to both the dummy and true segments.  
}
\subsection{Q1. Transcription Utility} \label{sec:transcriptionUtility}

We assess the transcription \WER after deploying end-to-end \name on the non-standard datasets.
Recall that Table~\ref{tab:WER_noVC} in Sec.~\ref{sec:accuracy_comparison} shows the baseline \WER performance of the \CSP and \OSP before applying \name. 

\paragraph{\WER Analysis:}
Column 4 in Table~\ref{tab:WER_VC_4levels} shows the end-to-end \WER for the prototype \name which represents the accumulative effect of segmentation, \SWS, and many-to-one \VC. Although \VC is the main contributor to \name's \WER, as is evident from Sec.~\ref{sec:Q4_utility_privacy} and Sec.~\ref{sec:eval:sws}, there are two main observations. First, many-to-one \VC is superior to  Deep Speech. Specifically, \name's relative improvement over Deep Speech ranges from 11.91\% to 32.25\% over the evaluation datasets (except for Carpenter2). Recall that we trained the \VC system using standard \ASR corpora, while we evaluate the \WER on non-standard cases. Still, \name's \WER is superior to that of Deep Speech, which has been trained through hundreds of hours of speech data. Second,  \name does not have the same performance for all the datasets. This observation arises again from the lack of diversity in our \VC training set. For example, the speaker in Carpenter 1 speaks loudly, allowing \VC to perform well. On the other hand, the second speaker (Carpenter 2) is not as clear or loud, which results in an inferior \VC performance. This observation is consistent with Deep Speech as well. \\
Our experiments show that these results can be improved by adding samples of the source speaker voice to the training pipeline of $net_1$ and $net_2$. We chose not to go with this approach as this limits the usability of the system, and in such a case sprocket (Sec.~\ref{sec:Q4_utility_privacy}) would be a better choice.

\subsection{Q2. Voice Biometric Privacy}
\label{sec:voice_eval}

To test the voice biometric privacy, we conduct two experiments using the voice analysis APIs (details in Sec.~\ref{sec:voice_threat}).
In the first experiment, we assess the \CSP's ability to separate speech belonging to different speakers after \name applies the \VC system. On our multi-speaker datasets, IBM diarization API concludes that there is only one speaker present. 

Furthermore, we run the diarization API after adding the dummy segments (after \TTS and \VC). Again, the API detects the presence of only one speaker. Thus, not only does \name hide the speaker's biometrics and map them to a single target speaker but also ensures noise indistinguishability, which is key to its privacy properties. 

The second experiment tests \name's privacy properties against a stronger adversary, who has access to samples from the true speakers. We enroll segments from the true speakers as well as the fake target speaker to Azure's Speaker Identification API. We pass the segments from \name (after adding dummy segments and applying \VC) to the API. 
When many-to-one \VC is applied, in all evaluation cases, the API identifies the segments as belonging to the fake target speaker. Not a single segment was matched to the original speaker. Both experiments show that prototype \name is effective in sanitizing the speaker's voice and ensuring noise indistinguishability. 

\begin{table*}[t]
\small
\begin{minipage}[b]{0.450\linewidth}\centering
\scalebox{1}{\begin{tabular}{@{}lccccc@{}}\toprule
\multirow{2}{*}{\textbf{Datasets}} & \multirow{2}{*}{\textbf{$|\mathcal{V}|$}} & {\multirow{1}{*}{\textbf{\# words}}} & \multicolumn{3}{c}{\textbf{\#Extra words due to dummy segments}}\\
 & & in $T^g_S$ & \textbf{d=2} & \textbf{d=5} & \textbf{d=15}\\ 
\midrule
VCTK p266 & 483 & 922 (\$0.22) & 2915 (\$0.68) & 7247 (\$1.69) & 23899 (\$5.58) \\
VCTK p262& 471 & 914 (\$0.21)  & 2845 (\$0.66) & 7157 (\$1.67) & 23230 (\$5.42)\\
Facebook & 1098 & 5326 (\$1.24) & 6660 (\$1.55) & 16567 (\$3.87) & 54038 (\$12.62)\\
Carpenter & 1474 & 7703 (\$1.80) & 8915 (\$2.08) & 22296 (\$5.20) & 72907 (\$17.02)\\
\bottomrule
\end{tabular}}
\caption{Number of extra words due to dummy segments and the additional monetary cost in USD with varying $d$, at $\epsilon=1$ and $\delta=0.05$.}
\label{tab:noise_cost}
\end{minipage}
\hfill
\begin{minipage}[b]{0.36\linewidth}
\centering
\includegraphics[width=0.82\textwidth, height=3.7cm]{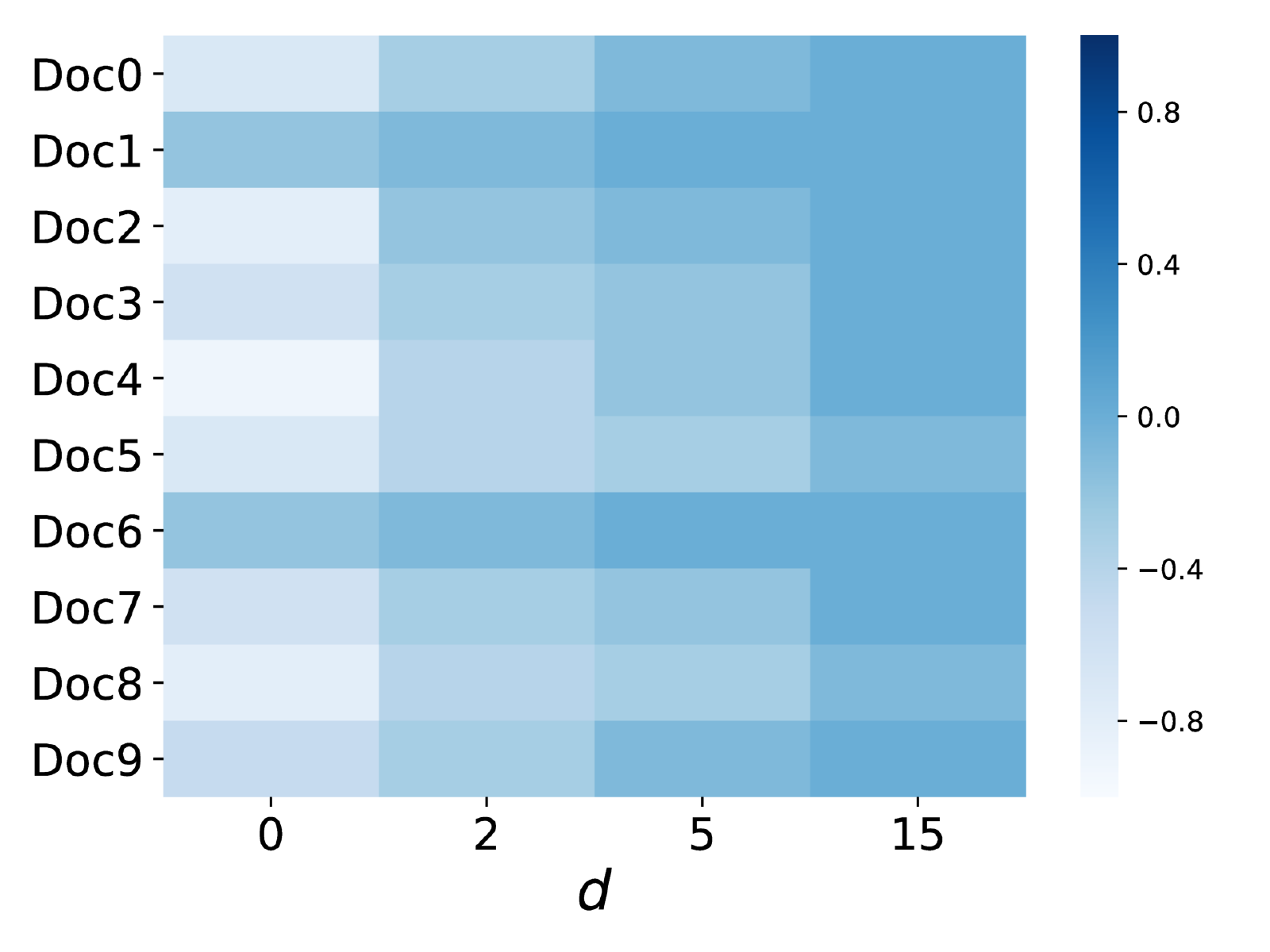}
\vspace{-4pt}
\captionof{figure}{Sentiment scores heatmap of 10 documents with varying $d$, at $\epsilon=1$ and $\delta=0.05$.}
\label{fig:heatmap}
\vspace{-4pt}
\end{minipage}
\end{table*}
\vspace{-0.23cm}

\subsection{Q3. Textual Privacy}
\label{sec:eval:text}

We perform an extensive evaluation of the textual privacy, including sensitive word scrubbing, analysis of the \DP mechanism, and defense against statistical analysis.

{\subsubsection{Sensitive Words Scrubbing:}\label{sec:eval:sws}
We run PocketSphinx keyword spotting on each dataset at different sensitivity scores ranging from 0.2 to 1\footnote{The sensitive keywords list for each dataset is in \ifpaper the full paper \cite{FP} \else Appendix B\fi.}. Fig.~\ref{fig:KWS} shows the detection true positive rate (\TPR) versus the false positive rate (\FPR) 
at different sensitivity scores. As the figure shows, the sensitivity score is a trade-off knob between privacy (high \TPR) and utility (low \FPR). We observe that \name is able to achieve almost perfect \TPR with low \FPR values.}  

{Next, we evaluate the impact of \SWS on the transcription utility. We set a sensitivity score of 0.95 for all the datasets to have a near-perfect \TPR while minimizing the \FPR. Our experiments show that the total duration of the segments flagged with sensitive keywords at this score is: 0.13\%, 0.06\%, 0.18\%, 0.20\%, and 0.08\% of the total duration of each dataset in Fig.~\ref{fig:KWS}. Then, we transcribe the sensitive-flagged segments using Deep Speech. The overall transcription accuracy after \SWS (i.e., equivalent to choosing voice cloning in \name as cloning results in no addition \WER)  is presented in the second column of Table~\ref{tab:WER_VC_4levels}. 
Since the segments are short, the portion of speech transcribed locally is limited. Hence, the impact of  the \OSP transcription errors is not significant.}

\subsubsection{DP Mechanism Analysis:}\label{sec:DPAnalysis}
We follow the DP mechanism described in Sec.~\ref{sec:DPmechanism}. 

\paragraph{Vocabulary Estimation:}
We estimate the vocabulary $\mathcal{V}$ using the \OSP transcript. Let $\mathcal{W}$ represent the set of unique words in $T^g_S$. We define the accuracy of the vocabulary estimation, $D_{acc}$, as the ratio between the count of the correctly identified unique words from $T^{OSP}_S$, $|\mathcal{W}|_{est}$, and the count of the unique words in $T^g_S$, $|\mathcal{W}|$. For our datasets, the domain estimation accuracy is at least $75.54\%$. We also calculate the weighted estimation accuracy defined as: $
D_{weighted} = \frac{\sum{P(w_{est}).\mathds{1}_{w_{est} \in \mathcal{W}}}}{|\mathcal{W}|}
$ where $P(w_{est})$ is the weight of the estimated word $w_{est}$ in $T^g_S$. $D_{weighted}$ is more informative since it gives higher weights to the most frequent words in $T^g_S$. The weighted estimation accuracy is $99.989\%$ in our datasets. From $\mathcal{W}_{est}$ we select $\mathcal{V}$ over which we apply the DP mechanism.  Additionally, we extend our vocabulary to contain a set of random words from the English dictionary.

\paragraph{Histogram Distance:}
 
We analyze the distance between the original and noisy histograms (after applying \name) and its impact on the cost of online transcription. Because of the nature of \name's DP mechanism, the noise addition depends on four values only: $|\mathcal{V}|$, $\epsilon$, $\delta$, and $d$.

For all our experiments, we fix the values of $\epsilon=1$ and $\delta=0.05$. {Table~\ref{tab:noise_cost} shows the amount of noise (dummy words) and their transcription cost in USD~\footnote{The pricing model of Google Speech-to-Text is: \$0.009 / 15 seconds.} for each of the evaluation datasets at different values of $d$. Each dataset has a different vocabulary size $|\mathcal{V}|$ and word count.}  
The increase in the vocabulary size requires adding more dummy segments to maintain the same privacy level. In \name, adding more noise comes at an increased monetary cost, instead of a utility loss. The table highlights the \textit{trade-off} between privacy and the cost of adding noise.

\begin{figure*}     
\centering
\begin{subfigure}{0.24\linewidth}
  \centering
  \includegraphics[width=\linewidth]{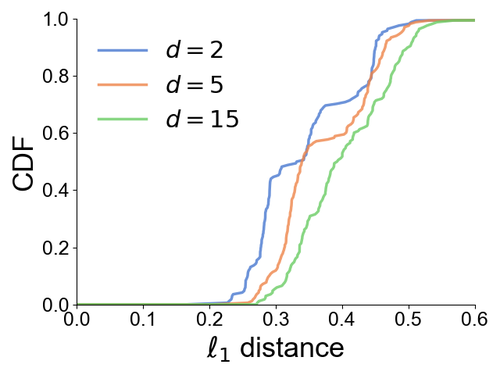}
  \caption{8 topics}
  \label{fig:cloud_1} 
\end{subfigure}
\begin{subfigure}{0.24\linewidth}
  \centering
  \includegraphics[width=\linewidth]{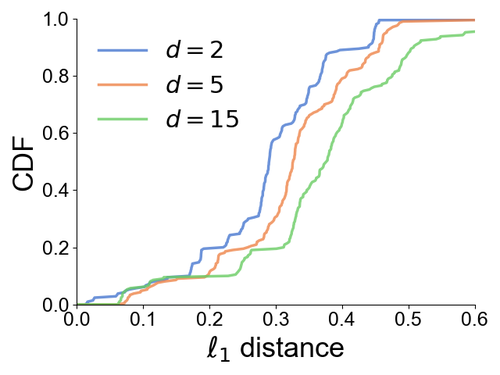}
  \caption{10 topics}
  \label{fig:cloud_2} 
\end{subfigure}
\begin{subfigure}{0.24\linewidth}
  \centering
  \includegraphics[width=\linewidth]{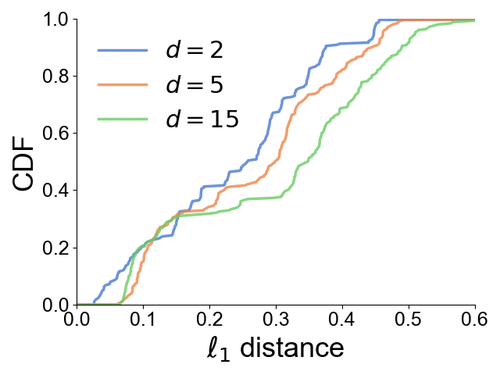}
  \caption{12 topics}
  \label{fig:cloud_3} 
\end{subfigure}
\begin{subfigure}{0.24\linewidth}
  \centering
  \includegraphics[width=\linewidth]{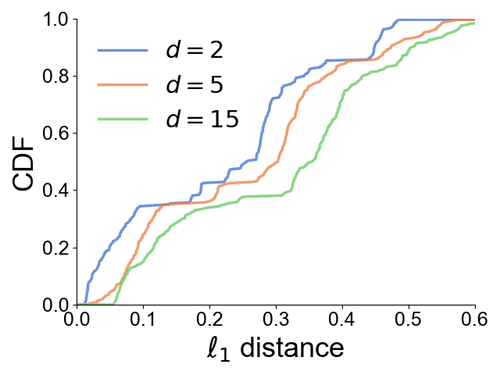}
  \caption{14 topics}
  \label{fig:cloud_4} 
\end{subfigure}
\caption{Topics $\ell_1$ distance \CDF at $d$ = 2, 5, and 15 for $t=$ 8, 10, 12, and 14}
\label{fig:topic_distance}
\end{figure*}
 
\subsubsection{Statistical Analysis}\label{sec:eval:statisticalAnalysis}
In this section, we evaluate the statistical analyses (details in Sec.~\ref{sec:textAnalysis}) performed by the adversary to extract textual information on the noisy transcripts obtained from \name. 
 
\paragraph{Topic Model:}
We generate the topic models from the documents corresponding to the original and noisy word histograms, and evaluate their $\ell_1$ distance. The topic model operates on a corpus of documents; hence we include eight more Supreme Court cases to our original evaluation datasets (Facebook and Carpenter). 
In this evaluation, we treat all these ten documents as one corpus; we aim to generate the topic model before and after applying \name to the whole corpus.

We use AWS Comprehend API to generate the topic model. The API needs the number of topics as a hyperparameter that ranges from 1 to 100. Based on our apriori knowledge of the true number of topics, we evaluate the topic model on the following number of topics $t = 8, 10, 12, $ and $14$.

We statistically evaluate the $\ell_1$ distance between true and noisy topics. The topic model $\mathcal{T}=\{\mathbb{T}_1,\cdots,\mathbb{T}_t\}$ is a set of $t$ topics where each $\mathbb{T}_i, i \in [t]$ is a word distribution. We use the Hungarian algorithm to match each noisy topic $\mathbb{ T}' _i \in \mathcal{ T}' $ to its closest match in $\mathcal{T}$, the true topic model. We evaluate the topics $\ell_1$ distance for 21 runs. At each run, we generate a random noise vector per document, select the corresponding dummy segments, and evaluate the topic model on the set of original and noisy documents.
Fig.~\ref{fig:topic_distance} shows the empirical \CDF of the topics $\ell_1$ distance at different values of $d$. As the figure shows, the higher the distance parameter $d$, the larger is the $\ell_1$ distance between true and noisy topics.
 
\noindent
{\paragraph{Stylometry:}
 
In this experiment, we assume that the \CSP applies stylometry analysis on $T^{\CSP}_S$ in an attempt to attribute it to an auxiliary document whose authors are known to the \CSP. To evaluate the worst-case scenario, we assume the adversary possesses the original document $T^{g}_S$, and we compute the $\ell_2$ distance of the stylometric feature vectors generated from $T^{\CSP}_S$ w.r.t $T^{g}_S$.

First, we compute the $\ell_2$ distance of $T^{\CSP}_S$ before applying \name. The respective values for the Facebook and Carpenter datasets are 28.19 and 60.45.  $T^{\CSP}_S$ differs from $T^{g}_S$ in  lexical features due to transcription errors and because the \CSP generates the punctuation  instead of the actual author.\\
Second, we apply \name on the two datasets at different values of the distance parameter: $d= 0, 2, 5, 15$. The corresponding $\ell_2$ distances for the Facebook (Carpenter) dataset equal: 73.14 (83.64), 328.80 (577.72), 947.58 (1629.79), and 2071.18 (3582.10). Note that the $\ell_2$ distance at $d=0$ shows the effect of segmentation and \SWS only on obfuscating the lexical features. Clearly, adding the dummy segments increases the $\ell_2$ distance.  This is expected as most of the lexical features are obfuscated by the \DP mechanism. 
} 
\noindent
\paragraph{Category Classification:} 
 Google's NLP API can classify a document to a predefined list of 700+ document categories\footnote{https://cloud.google.com/natural-language/docs/categories}. First, we run the classification API on the original documents from the topic modeling corpus. All of them classify as \textsf{Law \& Government}\xspace. Running the API on \name processed documents, using an extended-vocabulary (i.e., contains random words), dropped the classification accuracy to $0\%$. None of the documents got identified as legal, law, or government even at the smallest distance parameter value $d=2$.  
Although a portion of the noise words belongs to the original  \textsf{Law \& Government}\xspace category, segmentation, shuffling, and the out-of-domain noise words successfully confuse the classifier.

\paragraph{Sentiment Analysis:}
Sentiment analysis generates a score in the $[-1, 1]$ range, which reflects the positive, negative, or neutral attitude in the text. 
First, we evaluate the sentiment scores of the original ten documents.
For all of them, the score falls between $-0.2$ and $-0.9$, which is expected as they represent legal documents. Next, we evaluate the scores from \name processed documents considering an extended-vocabulary. We find that all scores increase towards a more positive opinion.  Fig.~\ref{fig:heatmap} shows a heatmap of the sentiment scores as we change the distance parameter $d$ for the then evaluation documents.
Thus, \name's two-pronged approach -- 1) addition of extended-vocabulary noise 2) removal of ordering information via segmentation and shuffling, proves to be effective. In a setting where the adversary has no apriori knowledge about the general domain of the processed speech, the noise addition mechanism gains extend from DP guarantee over the histogram to other NLP analyses as well.
 
{\subsubsection{Indistinguishability Of Dummy Segments}\label{sec:indistinguishability_Dummy}
The indistinguishability of the dummy segments is critical for upholding the DP guarantee in \name.
We perform two experiments to analyze whether current state-of-the-art NLP models can distinguish the dummy segments from their textual content.}
\noindent
{\paragraph{Most Probable Next Segment:}
In this experiment, the adversary has the advantage of knowing a true segment $\mathbb{S}_t$ that is at least a few sentences long from the Facebook dataset.  
We use the state-of-the-art GPT~\footnote{https://github.com/huggingface/transformers} language model by OpenAI~\cite{gpt} to determine the most probable next segment following $\mathbb{S}_t$ using the model's perplexity score. In NLP, the perplexity score measures the likelihood that a piece of text follows the language model. We get the perplexity score of stitching $\mathbb{S}_t$ to each of the other segments at the \CSP. The segment with the lowest perplexity score is selected as the most probable next segment. We iterate over all the true segments of the Facebook dataset, selecting them as $\mathbb{S}_t$. We observed that a dummy segment is selected as the most probable next segment in 53.84\% of the cases.
This result shows that the language model could not differentiate between the true and dummy segments even when part of the true text is known to the adversary.}
\noindent
{\paragraph{Segments Re-ordering:} Next, we attempt to re-order the segments based on the perplexity score. We give the adversary the advantage of knowing the first true segment $\mathbb{S}_0$. We get the perplexity score of $\mathbb{S}_0$, followed by each of the other segments. The segment with the lowest score is selected as the second segment $\mathbb{S}_1$ and so on. We use the normalized Kendall tau rank distance $K_\tau$ to measure the sorted-ness of the re-ordered segments. The normalized $K_\tau$ distance measures the number of pairwise disagreements between two ranking lists, where 0 means perfect sorting, and 1 means the lists are reversed. The $K_\tau$ score for running this experiment on the Facebook dataset is 0.512, which means that the re-ordered list is randomly shuffled w.r.t the true order. Hence, our attempt to re-order the segments has failed. }\\
\noindent
{These empirical results show that it is hard to re-order the segments or distinguish the dummy segments. This is expected due to three reasons: (1) the segments are very short; (2) the dummy segments are generated using a state-of-the-art language model; and (3) we observed that most of the transcription errors happen in the first and last words of a segment due to breaking the context. These errors add to the difficulty of re-ordering. Moreover, if the user partitions $S$ among multiple \CSP's (Sec.\ref{sec:DPmechanism}), then consecutive segments would not go to the same \CSP with high probability. This setting would increase \name's protection against re-ordering attacks.}
{\subsection{Q4: Flexibility of the Control Knobs }\label{sec:Q4}}

 {\subsubsection{Utility-Privacy Trade-off}\label{sec:Q4_utility_privacy}
 In this section, we empirically evaluate the controls knobs that provide a utility-privacy trade-off.}
\begin{figure}[t]     
\centering
\begin{subfigure}{0.49\linewidth}
  \centering
  \includegraphics[width=\linewidth]{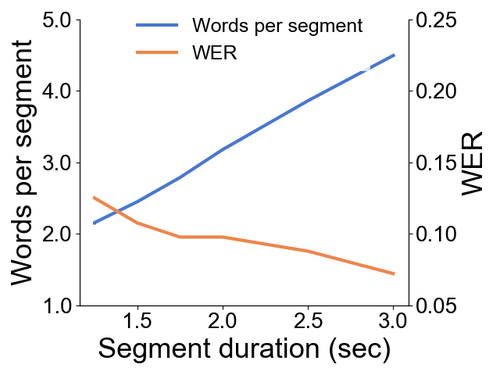}
  \caption{Facebook}
  \label{fig:mark_grid} 
\end{subfigure}
\begin{subfigure}{0.49\linewidth}
  \centering
  \includegraphics[width=\linewidth]{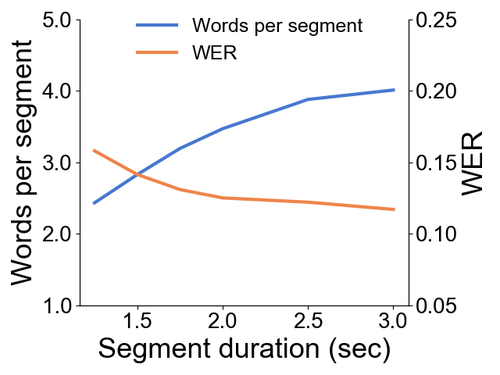}
  \caption{Carpenter}
  \label{fig:carpenter_grid} 
\end{subfigure}
\caption{Segmentation trade-off between utility and privacy. WER(\%) is measured using Google Cloud Speech-to-Text.}
\label{fig:seg_trade_off}
\end{figure}
{\paragraph {Minimum
segment length:} 
Fig.~\ref{fig:seg_trade_off} shows the trade-off between the number of words per segment and \WER as function of the minimum segment length. As expected, increasing the minimum duration of a segment results in an increase in the number of words per segment. The \WER in turn drops when the number of words per segment increase as the transcription service has more textual context. However, it can lead to potential privacy leakage. The results in Fig.~\ref{fig:seg_trade_off} indicate that for two real-world datasets, the number of words per segment can be kept between 2 and 3 with an acceptable degradation of the \WER. 
} 
\noindent
{\paragraph{Voice Cloning:} Voice cloning does not affect the true segments (it is only applied to dummy segments), resulting in no additional \WER degradation. The \WER for deploying voice cloning is incurred only due to segmentation and \SWS. Thus, as shown in column 2 of Table ~\ref{tab:WER_VC_4levels}, the relative improvement in \WER ranges from 44\% to 80\% over Deep Speech.} This approach, however, has two limitations. First, the speaker's voice biometrics from $S$ are not protected. 
\minrev{Second, there is no guarantee that an adversary would not be able to distinguish the cloned speech segments from the original ones.}   
\noindent
{\paragraph{Sensitivity score of \KWS:} As shown in  Fig.~\ref{fig:KWS}, lower the sensitivity score, higher is the \TPR and hence greater is the privacy (most prominent in the Carpenter2 dataset). 
However, this also increases the FPR, which means a larger number of non-sensitive segments are transcribed via the \OSP resulting in reduced accuracy. }
\noindent
\paragraph{One-To-One \VC:} 
Table~\ref{tab:WER_VC_4levels}, column 3, shows that one-to-one \VC outperforms many-to-one \VC on most of the datasets. 
This result is expected since sprocket is trained and tested on the same set of source speakers while the many-to-one \VC system generalizes to previously unseen speakers. \\
We observe that the improvement for the VCTK dataset is more significant than others. Recall that in our one-to-one \VC implementation in Sec.~\ref{sec:implementaton}, the target voice for VCTK is a natural voice -- speaker p306. The target voice for the other datasets is a synthetic one, which hinders the quality of the converted voice and the transcription accuracy.
We investigate this observation by training sprocket for VCTK on a synthetic target voice as well. The \WER then increased to 19.33\% and 9.21\% for p266 and p262. 
Hence, we attribute the difference in the relative improvement to the target voice naturalness. In practice, the target voice could easily be a natural pre-recorded voice, and the users are asked to repeat the same utterances at the enrollment phase.\\
However, the one-to-one \VC technique suffers from some privacy loss. 
The one-to-one \VC system translates the acoustic features from a source to a target speaker's voice. Hence, it may leak some features from the source speaker. We observed that one-to-one \VC is vulnerable to speaker identification analysis. Specifically, using Azure's Speaker Identification API,
 10\% of the voice-converted segments using sprocket were identified to their true speakers. 

{\subsubsection{Usability-Privacy Trade-off}
\label{sec:usability_privacy}}
{In our setting, usability can be measured along three dimensions: latency, monetary cost, and implementation overhead. However, we would like to stress that \name is not designed for real-time speech transcription. 
Hence, latency is not a primary concern for \name. Nevertheless, we include it in the following discussion for the sake of completeness.
}

{\paragraph{{Latency Evaluations:}} 
Note that all the operations of \name are performed on speech segments. Hence, the latency is linear in the number of segments. We evaluate the end-to-end system latency per segment (with length $\sim 6$s) for the \OSP, the \CSP, and \name; the latency values are 2.17s, 1.70s, and 14.90s, respectively. We observe that the overhead of \name is mostly attributed to the many-to-one \VC (11s per segment on average).  
When  voice cloning (or one-to-one \VC) is applied instead, \name's end-to-end per segment latency reduces to  3.90s (or 11.47s) at the expense of a privacy loss as discussed in Sec.\ref{sec:Q4_utility_privacy}.}

\paragraph{Vocabulary Size:} Considering a larger $\mathcal{V}$ (Sec.~\ref{sec:DPmechanism}) increases the scope of the \DP guarantee. For example, adding external words provides protection against statistical analysis like text classification (Sec.\ref{sec:eval:text}). However, larger $\mathcal{V}$ results in increased amount of dummy segments and hence, increased monetary cost (Table~\ref{tab:noise_cost}). For example, extending $\mathcal{V}$ by $\sim1000$ out-of-domain words for the Carpenter dataset incurred a total cost of \$25 at $d=15$. 
\noindent
\paragraph{Distance Parameter $d$:} As explained in Sec.~\ref{sec:DP:discussion}, larger the value of $d$, greater is the scope of privacy.  
However, the amount of required noise increases by $d$. For example, for the dataset VCTK p266, increasing $d$ from 2 to 15  increases the cost by roughly \$5 (Table~\ref{tab:noise_cost}).

{\subsubsection{Utility-Usability Trade-off}\label{sec:utility_usability}
The following control knobs provide a venue for customizing the utility-usability trade-off.  
\paragraph{Number of \CSPs:} As discussed in Sec.~\ref{sec:DP:discussion}, using multiple \CSPs reduces the amount of dummy segments (and hence, the monetary cost) in \name.  
However, it comes at the price of utility; the transcription accuracy of the different available \CSPs varies. For example, from Table \ref{tab:WER_noVC}, we observe that AWS has a higher \WER than Google. Thus, using multiple \CSPs may result in a lower mean utility.
\noindent \paragraph{One-to-One \VC:} As discussed above, one-to-one \VC technique has lower \WER than many-to-one \VC technique (Table \ref{tab:WER_VC_4levels}). However, it requires access to representative samples of the source speaker voice for parallel training thereby limiting scalability for previously unseen speakers (Sec.~\ref{sec:VoiceConversion}).}

\section{Related Work}
\label{sec:RW}
In this section, we provide a summary of the related work. 

\paragraph{Privacy by Design:}
One class of approaches redesigns the speech recognition pipeline to be private by design. For example, Srivastava et al. proposes an encoder-decoder architecture for speech recognition~\cite{srivastava:hal-02166434}. 
Other approaches address the problem in an SMC setting by representing the basic operations of a traditional \ASR system using cryptographic primitives~\cite{pathak2013privacy}. 
VoiceGuard is a system that performs \ASR in the trusted execution environment of a processor 
~\cite{brasser2018voiceguard}. However, these approaches require redesigning the existing systems. 

\paragraph{Speech Sanitization:}
Recent approaches have considered the problem from a similar perspective as ours. They sanitize the speech before sending it to the \CSP. 
One such approach randomly perturbs the MFCC, pitch, tempo, and timing features of a speech before applying speech recognition~\cite{vaidya2019you}. Others sanitize the speaker's voice using vocal tract length normalization (VTLN)~\cite{qian2017voicemask,qian2018towards}. 
A recent approach modifies the features relevant to emotions from an audio signal, makes them less sensitive through a GAN~\cite{aloufi2019emotionless}. 
Last, adversarial attacks against speaker identification systems can provide some privacy properties. These approaches apply minimal perturbations to the speech file to mislead a speaker identification network~\cite{cai2018attacking,Kreuk_2018}.

These approaches are different from ours in two ways. First, they do not consider the textual content of the speech signal. The only exception is the approach by Qian et al.~\cite{qian2018towards}, which addresses the problem of private publication of speech datasets. This approach requires a text transcript with the audio file, which is not the case for the speech transcription task.  In addressing the textual privacy of a speech signal, \name adds indistinguishable noise to the speech file. The proposed techniques
fail to provide this property. Second, the approaches above only consider voice privacy against a limited set of features, such as speaker identification or emotion recognition.  \name applies many-to-one \VC to provide perfect voice privacy.

\section{Conclusion}
\label{sec:conclusion}
In this paper, we have proposed \name, an end-to-end system for speech transcription that  (1) protects the users' privacy along the acoustic and textual dimensions at (2) an improved performance relative to offline \ASR, (3) while providing customizable utility, usability, and privacy trade-offs.

\section*{Acknowledgment}

The work reported in this paper was supported in part by the NSF under grants 1661036, 1838733, 1942014, and 1931364. We also acknowledge Google for providing us with Google Cloud Platform credits and NVIDIA Corporation with the donation of the Quadro P6000 GPU used for this research. We would like to thank the anonymous reviewers for their useful comments and Micah Sherr for shepherding this paper.

\bibliographystyle{abbrv}
\bibliography{_10_refs.bib}

\ifpaper \newpage 
\else
\newpage 
\section{Appendix A}
This appendix contains the proof of Theorem~\ref{thm:topic_bound} of Sec.~\ref{sec:DP}.

\begin{table}[ht]
\centering
\caption {Notations}
\scalebox{0.7}{
 \begin{tabular}{|l|l|} \toprule
 \multicolumn{1}{|c|}{\textbf{Symbol} } &  \multicolumn{1}{|c|}{\textbf{Explanations}}\\\hline
  $\mathcal{V}$ & - the vocabulary  \\
  $t$ & - total number of topics \\ 
  $w$ & - represents a word \\
  $k$ & - minimum number of words per topic \\
  $\mathbb{T}$ & - represents a true topic \\
  $\mathbb{T}'$ & - represents a perturbed topic \\
  $\mathcal{T}=\langle \mathbb{T}_1, \cdots, \mathbb{T}_t\rangle$ & - represents the true topic model \\
  $\mathcal{T}'=\langle \mathbb{T}'_1, \cdots, \mathbb{T}'_t\rangle$ & - represents the perturbed topic model\\
  $d$ & - chosen distance parameter \\
  $n$ &- total number of documents\\
  $D$ & - represents a document\\
  $\mathcal{D}=\cup_{i=1}^d D$ & - the corpus of documents\\ 
  $\omega_j$ & - total number of unique words in document $D_j$\\
  $|w_{l,j}|$ & - count of word $w_l$ in document $D_j$\\ 
  $|D_j|$&- total number of words in the document $D_j$\\
  $\mathcal{P}$ & - the topic distribution for $D_j$ from topic model $\mathcal{T}$\\
  $\mathcal{P}'$ & - the topic distribution for $D_j$ from topic model $\mathcal{T}'$\\
  $\mathcal{Q}_i$ & - the word distribution for topic $T_i$ in $\mathcal{T}$\\ 
  $\mathcal{Q}'_i$ & - the word distribution for topic $T'_i$ in $\mathcal{T}'$\\
  $p_{i,j}$ & - probability of topic $\mathcal{T}_i$ occurring in document $D_j$ output  \\
  $q_{i,l}$ &- probability of word  $w_l$ occurring in topic $\mathcal{T}_i$  \\ 
  $q'_{i,l}$ &- probability of word  $w_l$ occurring in topic $\mathcal{T}'_i$ \\
  $p'_{i,j}$ & - probability of topic $\mathcal{T}'_i$ occurring in document $D_j$ output \\
  $T(w)$ & - probability of word $w$ occurring in topic $T$\\
  $T'(w)$ & - probability of word $w$ occurring in topic $T'$\\
\hline\end{tabular}}
 \label{Notations}
\end{table}

\begin{definition}The total variation distance between two probability distributions $P$ and $Q$ is defined as \begin{gather}\delta_{TV}(P,Q)=sup_{A\in \mathcal{F}}|P(A)-Q(A)|\end{gather} where $\mathcal{F}$ represents a sigma-algebra on the subset of the sample space $\Omega$.\end{definition}
\begin{theorem} For a countable set, $\Omega$ \begin{gather}\delta_{TV}(P,Q)=\frac{1}{2}||P-Q||_1=\frac{1}{2}\sum_{\omega \in \Omega}|P(\omega)-Q(\omega)|\end{gather}\end{theorem}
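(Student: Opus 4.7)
The plan is to prove the identity by identifying the extremal set that achieves the supremum in the definition of $\delta_{TV}$, and then connecting that extremal value to the $\ell_1$ sum. Concretely, I would split $\Omega$ according to the sign of $P(\omega) - Q(\omega)$, exploit the fact that both $P$ and $Q$ are probability measures (so their total masses agree), and show that the two halves of the $\ell_1$ sum coincide, giving the factor of $1/2$.

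First I would define the set $A^\star = \{\omega \in \Omega : P(\omega) \geq Q(\omega)\}$ and its complement $B^\star = \Omega \setminus A^\star$. Since $\Omega$ is countable, $A^\star$ and $B^\star$ are measurable with respect to the power-set sigma-algebra $\mathcal{F} = 2^\Omega$, so they are admissible candidates in the supremum. I would then argue that for any $A \in \mathcal{F}$,
\begin{equation*}
P(A) - Q(A) = \sum_{\omega \in A} \bigl(P(\omega) - Q(\omega)\bigr) \leq \sum_{\omega \in A^\star} \bigl(P(\omega) - Q(\omega)\bigr) = P(A^\star) - Q(A^\star),
\end{equation*}
because restricting to $A^\star$ keeps exactly the positive summands. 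By symmetry (swapping $P$ and $Q$), the same bound holds for $Q(A) - P(A)$ with $B^\star$ in place of $A^\star$. Hence
\begin{equation*}
\delta_{TV}(P,Q) = \sup_{A \in \mathcal{F}} |P(A) - Q(A)| = \max\bigl(P(A^\star) - Q(A^\star),\; Q(B^\star) - P(B^\star)\bigr).
\end{equation*}

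Next I would use the fact that $P(\Omega) = Q(\Omega) = 1$, which forces $P(A^\star) - Q(A^\star) = Q(B^\star) - P(B^\star)$ since $[P(A^\star) + P(B^\star)] - [Q(A^\star) + Q(B^\star)] = 0$. This lets me replace the $\max$ with either quantity, and in particular with their average. Averaging is the key move that produces the factor $1/2$:
\begin{equation*}
\delta_{TV}(P,Q) = \tfrac{1}{2}\bigl[(P(A^\star) - Q(A^\star)) + (Q(B^\star) - P(B^\star))\bigr] = \tfrac{1}{2}\sum_{\omega \in A^\star}\bigl(P(\omega)-Q(\omega)\bigr) + \tfrac{1}{2}\sum_{\omega \in B^\star}\bigl(Q(\omega)-P(\omega)\bigr).
\end{equation*}
By the definition of $A^\star$ and $B^\star$, each summand on the right is exactly $\tfrac{1}{2}|P(\omega)-Q(\omega)|$, so the two sums combine into $\tfrac{1}{2}\sum_{\omega \in \Omega}|P(\omega)-Q(\omega)| = \tfrac{1}{2}\|P-Q\|_1$, completing the proof.

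The only subtle step is the extremality argument showing that $A^\star$ attains the supremum; once that is in place, everything else is an elementary rearrangement of finite or absolutely convergent sums (convergence is guaranteed by $\|P-Q\|_1 \leq \|P\|_1 + \|Q\|_1 = 2$). Countability of $\Omega$ matters only insofar as it ensures that the pointwise sums are well-defined and that $A^\star \in \mathcal{F}$; no measure-theoretic machinery beyond countable additivity is needed.
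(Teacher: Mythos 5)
Your proof is correct and complete: the extremal-set argument for $A^\star=\{\omega: P(\omega)\ge Q(\omega)\}$, the use of $P(\Omega)=Q(\Omega)=1$ to equate the two one-sided discrepancies, and the averaging step that produces the factor $\tfrac12$ are exactly the canonical derivation of this identity. The paper itself states this theorem without proof, invoking it as a standard fact in the derivation of the topic-distance bound, so there is no in-paper argument to compare against; your write-up simply supplies the missing (textbook) proof, and it is sound.
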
 \begin{assumption}If a topic $p$ has nonzero probability of occurring in a document, then the topic must contribute at least one count for each word in it for the document. \label{assum1} \end{assumption} 
\begin{lemma}From assumption \ref{assum1}, \begin{gather}min_{i,j}\{p_{i,j}\}\geq \frac{k}{D_{max}}\end{gather}\label{pmin}\end{lemma}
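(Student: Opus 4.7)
} The plan is to translate Assumption~\ref{assum1} into a direct counting argument on the words of an arbitrary document $D_j$ and then minimize over $(i,j)$. First, I would recall the topic modeling setup: each of the $t$ topics is specified by a distribution over at least $k$ distinct words (since the user requires a minimum of $k$ words per topic in the statement of Theorem~\ref{thm:topic_bound}). Thus topic $i$ is associated with at least $k$ vocabulary items.

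Next, I would fix an arbitrary $(i,j)$ and apply Assumption~\ref{assum1}. Under the standard topic model semantics assumed throughout the excerpt (which is implicit in writing $\min_{i,j} p_{i,j}$ at all), every topic is taken to have nonzero probability of occurring in every document; Assumption~\ref{assum1} then forces topic $i$ to contribute at least one count to $D_j$ for each of its $\geq k$ associated words. Hence the total number of word occurrences in $D_j$ that are generated by topic $i$ is at least $k$.

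Then I would relate this count to the probability $p_{i,j}$. By definition, $p_{i,j}$ equals the fraction of word occurrences in $D_j$ attributed to topic $i$, i.e.\ (number of word occurrences in $D_j$ from topic $i$) divided by $|D_j|$. Combining with the previous step gives
\begin{gather*}
p_{i,j} \;\geq\; \frac{k}{|D_j|}.
\end{gather*}
Since $|D_j| \leq D_{max} := \max_j |D_j|$ by definition, we have $p_{i,j} \geq k/D_{max}$. Taking the minimum over all pairs $(i,j)$ preserves the bound and gives the claim.

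The main obstacle is the subtle point in the transition from ``nonzero probability'' (as stated in Assumption~\ref{assum1}) to a uniform positive lower bound $k/D_{max}$: the assumption only guarantees $k$ counts for topics that actually appear in $D_j$, while the lemma implicitly requires this to hold for every $(i,j)$. I would therefore explicitly state the (standard) modeling convention that $p_{i,j} > 0$ for all $(i,j)$ in a smoothed topic model (as used by LDA-style models considered here), which is what makes $\min_{i,j} p_{i,j}$ a meaningful positive quantity and allows Assumption~\ref{assum1} to apply uniformly.
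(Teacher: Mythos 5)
Your proposal is correct and follows essentially the same route as the paper: both read Assumption~\ref{assum1} as forcing topic $i$ to contribute at least one occurrence of each of its $\geq k$ words to $D_j$, deduce $p_{i,j}\geq k/|D_j|$, and finish with $|D_j|\leq |D_{max}|$. The only cosmetic difference is in the bookkeeping --- the paper writes the assumption as $p_{i,j}\cdot q_{i,l}\geq 1/|D_j|$ for every word $l$ and then uses $\min_l\{q_{i,l}\}\leq 1/k$, while you aggregate the $\geq k$ per-word counts directly; your explicit remark that the bound needs $p_{i,j}>0$ for \emph{all} pairs $(i,j)$ (a smoothing convention the paper leaves implicit) is a fair and slightly more careful reading.
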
\begin{proof}Restating assumption \ref{assum1} we get,
\begin{gather*}\forall i, \forall j, \forall l \thinspace\thinspace i \in [t], j \in [n], l \in [|\mathcal{V}|]\\ p_{i,j}\cdot q_{i,l} \geq \frac{1}{|D_j|}\\ \Rightarrow p_{i,j}\cdot q_{i,l} \geq \min_{j}\Big\{\frac{1}{|D_j|}\Big\}\\\Rightarrow p_{i,j}\cdot q_{i,l} \geq \frac{1}{|D_{max}|}\\\Rightarrow p_{i,j}\cdot min_{l}\{q_{i,l}\} \geq \frac{1}{|D_{max}|} \\\Rightarrow p_{i,j} \geq \frac{k}{|D_{max}|}[\because \min_l\{q_{i,l}\} \leq \frac{1}{k}]  \end{gather*} Thus $\min_{i}\{p_{i,j}\}\geq \frac{k}{|D_{max}|}$\end{proof}
\begin{lemma}The maximum possible value of $p_{max}$ is $1-(t-1)\cdot\frac{k}{|D_{max}|}$.\label{pmax}\end{lemma}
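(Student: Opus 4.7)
The plan is to derive the upper bound on $p_{\max}$ directly from the probability-simplex constraint together with the lower bound already established in Lemma \ref{pmin}. For any fixed document $D_j$, the topic distribution $\mathcal{P}$ is a probability distribution over the $t$ topics, so the $p_{i,j}$'s must sum to $1$. I would fix an index $i^\star$ achieving the maximum, write $p_{\max} = p_{i^\star,j} = 1 - \sum_{i \neq i^\star} p_{i,j}$, and then use Lemma \ref{pmin} to lower-bound each of the $t-1$ remaining terms by $k/|D_{\max}|$.

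Concretely, the chain of reasoning I expect to write is: by Lemma \ref{pmin}, $p_{i,j} \geq k/|D_{\max}|$ for every $i \in [t]$; summing this inequality over the $t-1$ indices $i \neq i^\star$ yields $\sum_{i \neq i^\star} p_{i,j} \geq (t-1)\cdot k/|D_{\max}|$; rearranging the simplex identity gives $p_{\max} \leq 1 - (t-1)\cdot k/|D_{\max}|$. To justify that this bound is tight (i.e. is the \emph{maximum} possible value and not merely an upper bound), I would exhibit a distribution that saturates it: set $p_{i^\star,j} = 1 - (t-1)\cdot k/|D_{\max}|$ and $p_{i,j} = k/|D_{\max}|$ for all $i \neq i^\star$. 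This is a valid probability distribution (nonnegative and summing to $1$) as long as $1 - (t-1)\cdot k/|D_{\max}| \geq k/|D_{\max}|$, i.e.\ $t \cdot k \leq |D_{\max}|$, which is the natural feasibility condition for having $t$ topics each carrying at least $k$ words in a document of size at most $|D_{\max}|$.

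There is no real obstacle here; the statement is essentially a pigeonhole argument on the simplex once Lemma \ref{pmin} is in hand. The only subtlety worth flagging in the write-up is the implicit feasibility assumption $tk \leq |D_{\max}|$ needed for Assumption \ref{assum1} to be consistent in the first place (otherwise one could not even have $t$ topics simultaneously present in a single document while respecting the $k$-words-per-topic floor). I would state this feasibility explicitly before the final displayed inequality so that the ``maximum is attained'' claim is unambiguous, and then conclude with $p_{\max} \leq 1 - (t-1)\cdot k/|D_{\max}|$ with equality achievable, completing the proof.
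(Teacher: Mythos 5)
Your proof is correct and follows essentially the same route as the paper's: both arguments combine the simplex constraint $\sum_i p_{i,j}=1$ with the per-topic lower bound $p_{i,j}\geq k/|D_{\max}|$ from Lemma \ref{pmin}, the only difference being that the paper phrases it as a contradiction while you sum the inequalities directly. Your version is if anything more complete, since you also exhibit the saturating distribution and flag the feasibility condition $tk\leq|D_{\max}|$, both of which the paper's terse proof leaves implicit.
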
\begin{proof} According to the proof statement, the topic mixture for a document $D_j$ is given by $\mathcal{P}_j=\langle \frac{k}{|D_{max}|}, \cdots,\frac{k}{|D_{max}|},1-(t-1)\cdot \frac{k}{|D_{max}|}\rangle$. We will prove this by contradiction. Let there exist some $\bar{p} \in \mathcal{P}_j$ such that $\bar{p} >  \frac{k}{|D_{max}|}$. Clearly this means that $\max_{i,j}\{p_{i,j}\} = 1-(t-2)\cdot \frac{k}{|D_{max}|}-\bar{p} < 1-(t-1)\cdot \frac{k}{|D_{max}|}$. Clearly from lemma \ref{pmin} this concludes our proof.   \end{proof}
\begin{theorem}For any pair of topics $(\mathbb{T},\mathbb{T}') \in \mathcal{T}\times \mathcal{T}'$, 
\begin{gather*}
{\textstyle
||\mathbb{T}-\mathbb{T}'||_1 \geq 2  \frac{1}{\Big(1-(t-1) \frac{k}{\max_j|D_j|}\Big)}  \Big(\frac{\mathcal{C}_{min}}{t} -\frac{1}{2} \Big(1-t \frac{k}{\max_j|D_j||}\Big)\Big),} 
\end{gather*} 

where $\mathcal{C}_{min}=min_{j,l}\Big\{\frac{v\cdot(|D_j|-|w_{l,j}|\omega_j)}{|D_j|\cdot(|D_j|+v\cdot \omega_j}\Big\}$, $|D_j|$  is the total number of words in $D_j$, $\omega_j$ is the total number of  unique words, $v$ is the variance of the distribution 
$Lp(\epsilon', \delta',d),$\\$ v = p r e^{(\epsilon'\cdot\eta_0)/d}\big(\frac{1}{(1-r)^2}+\frac{2r}{(1-r)^3}\big)+ p\bar{f} r \big(e^{-(\epsilon'\cdot\eta_0)/d}-e^{(\epsilon'\eta_0)/d}\big) - \big(pe^{(\epsilon'\cdot\eta_0)/d}\frac{r}{(1-r)^2}+ p f  (e^{-(\epsilon'\cdot\eta_0)/d}-e^{(\epsilon'\eta_0)/d})\big)^2,$ 
$ \bar{f}=\frac{df}{dr}, $ 
$f=r\big(\frac{1-r^d-d\cdot r^{d-1}(1-r)}{(1-r)^2}\big),$ 
$ r=e^{-(\epsilon'/d)},  p=\frac{e^{\epsilon/d}-1}{e^{\epsilon/d}+1},$ 
$\eta_0 = -\frac{d \cdot \ln((e^{\epsilon'/d} + 1)\delta')}{\epsilon} + d, \epsilon'=ln(1+\frac{1}{\beta}(e^\epsilon-1)), $
$\delta'=\beta\delta$ and $|w_{l,j}|$ is the number of times the word 
 $w_l \in \mathcal{V}$ appears in transcript $D_j$.\end{theorem}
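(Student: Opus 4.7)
The plan is to propagate the noise injected by the truncated Laplace mechanism into the word counts backward to a bound on the per-topic word distribution distance, exploiting the topic-model decomposition $\Pr[w_l\mid D_j]=\sum_i p_{i,j}\,q_{i,l}$ (and analogously with primes for the noisy model). First I would establish a per-document, per-word lower bound on the noise-induced distortion: writing $\Pr[w_l\mid D_j]'=(|w_{l,j}|+\eta_l)/(|D_j|+\sum_{l'}\eta_{l'})$ where each $\eta$ is an independent truncated Laplace draw of variance $v$, a first-order moment expansion of this ratio (feeding in the exact variance formula that accompanies Theorem~\ref{thm:truncated}) yields an expected absolute deviation of $\tfrac{v(|D_j|-|w_{l,j}|\omega_j)}{|D_j|(|D_j|+v\omega_j)}$. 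Taking the infimum over $(j,l)$ produces $\mathcal{C}_{min}$, and combining with the identity $\|P-Q\|_1=2\delta_{TV}(P,Q)$ yields the per-document bound
\begin{align*}
\sum_l\bigl|\Pr[w_l\mid D_j]'-\Pr[w_l\mid D_j]\bigr|\ \geq\ 2\mathcal{C}_{min}.
\end{align*}

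Second, I would split this per-document distortion into a topic-distribution piece and a topic-mixture piece via the triangle inequality. Inserting the cross term $\sum_i p_{i,j}q'_{i,l}$ and rearranging gives
\begin{align*}
\sum_l\Bigl|\sum_i p_{i,j}(q_{i,l}-q'_{i,l})\Bigr|\ \geq\ 2\mathcal{C}_{min}\ -\ \|\mathcal{P}_j-\mathcal{P}'_j\|_1.
\end{align*}
Lemmas~\ref{pmin} and~\ref{pmax} confine every coordinate of $\mathcal{P}_j$ and $\mathcal{P}'_j$ to the interval $[k/|D_{max}|,\,1-(t-1)k/|D_{max}|]$, so the topic-mixture $\ell_1$ distance satisfies $\|\mathcal{P}_j-\mathcal{P}'_j\|_1\leq 2\bigl(1-tk/|D_{max}|\bigr)$, which supplies the subtractive $(1/2)(1-tk/|D_{max}|)$ term appearing inside the parenthesis of the theorem's bound.

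Third, I would upper-bound the left-hand side of the previous display by the per-topic distances: $\sum_l|\sum_i p_{i,j}(q_{i,l}-q'_{i,l})|\leq\sum_i p_{i,j}\|\mathbb{T}_i-\mathbb{T}'_i\|_1\leq p_{max}\sum_i\|\mathbb{T}_i-\mathbb{T}'_i\|_1$. Interpreting the stated $\|\mathbb{T}-\mathbb{T}'\|_1$ as a typical matched-pair distance (e.g.\ the one produced by the Hungarian matching already used in Sec.~\ref{sec:eval:statisticalAnalysis}) lets me write $\sum_i\|\mathbb{T}_i-\mathbb{T}'_i\|_1=t\|\mathbb{T}-\mathbb{T}'\|_1$. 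Substituting $p_{max}=1-(t-1)k/|D_{max}|$ from Lemma~\ref{pmax} and solving for $\|\mathbb{T}-\mathbb{T}'\|_1$ reproduces the theorem's stated lower bound.

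The main obstacle is the first step. Both numerator and denominator of the perturbed ratio contain independent truncated-Laplace noise, and the noise distribution itself is an awkward object (non-symmetric about zero after truncation, with the cumbersome closed-form variance given in the theorem). Carrying out the moment expansion so that the residual algebra collapses \emph{exactly} to $\tfrac{v(|D_j|-|w_{l,j}|\omega_j)}{|D_j|(|D_j|+v\omega_j)}$ — rather than a looser surrogate — requires careful accounting of the $\omega_j$-fold noise sum in the denominator and of cross-term cancellations with the numerator noise $\eta_l$. Once this algebraic identity is secured, the remainder of the argument reduces to triangle inequalities and the already-established coordinate bounds on the topic mixtures.
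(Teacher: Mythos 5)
Your overall architecture diverges from the paper's in both of its load-bearing steps, and the divergences create real gaps.

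First, the paper performs no moment expansion of a random ratio. It simply \emph{posits} the perturbed word frequency as $\frac{|w_{l,j}|+v}{|D_j|+v\cdot\omega_j}$ --- that is, it deterministically substitutes the variance $v$ for the noise added to each count and $v\cdot\omega_j$ for the total added mass --- after which $\mathcal{C}_{j,l}=\frac{v\,(|D_j|-|w_{l,j}|\omega_j)}{|D_j|\,(|D_j|+v\,\omega_j)}$ falls out of an exact one-line subtraction of two fractions. The obstacle you flag as the main difficulty therefore never arises in the paper; more importantly, if you actually carried out the first-order expansion you describe, the leading correction would involve the \emph{mean} of the truncated Laplace noise, not its variance $v$, so your step~1 would not collapse to the stated $\mathcal{C}_{j,l}$. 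You have set yourself a derivation that is both harder than the paper's and incapable of landing on its formula.

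Second, your passage from the per-document distortion to the per-topic bound goes through $\sum_i p_{i,j}\|\mathbb{T}_i-\mathbb{T}'_i\|_1$ and the averaging identity $\sum_i\|\mathbb{T}_i-\mathbb{T}'_i\|_1=t\,\|\mathbb{T}-\mathbb{T}'\|_1$; this only lower-bounds an average (or one matched pair), not ``any pair $(\mathbb{T},\mathbb{T}')\in\mathcal{T}\times\mathcal{T}'$'' as the theorem asserts. The paper instead argues pointwise: it fixes a word $w_l$, notes that some topic index attains $|p'_{i,j}q'_{i,l}-p_{i,j}q_{i,l}|\ge \mathcal{C}_{j,l}/t$, decomposes $p_1q_1-p_2q_2=(q_1-q_2)p_1+q_2(p_1-p_2)$, bounds $|p_1-p_2|\le 1-t\,k/|D_{max}|$ via Lemmas~\ref{pmin} and~\ref{pmax} together with $q_2<\tfrac12$ to extract $q_1-q_2$, and finally uses $\|\mathbb{T}-\mathbb{T}'\|_1=2\,\delta_{TV}(\mathbb{T},\mathbb{T}')\ge 2\sup_w|\mathbb{T}(w)-\mathbb{T}'(w)|$. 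That is where the $\tfrac12\big(1-t\,k/|D_{max}|\big)$ term and the $\big(1-(t-1)k/|D_{max}|\big)^{-1}$ prefactor come from; your triangle-inequality split through $\|\mathcal{P}_j-\mathcal{P}'_j\|_1$ yields a $\tfrac1t\big(1-t\,k/|D_{max}|\big)$ term instead, which happens to be at least as strong for $t\ge2$ but is not the stated bound and, given the two gaps above, is not actually established.
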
 
 
 \begin{proof} For any word $w_l$ in a document $D_j, l \in [|\mathcal{V}|], j \in [n]$, \begin{gather}\sum_{i=1}^{t}p_{i,j}q_{i,l} =\frac{|w_{l}|}{|D_j|}\label{true}\\\sum_{i=1}^t p'_{i,j}q'_{i,l}=\frac{|w_l|+v}{|D_j|+v\cdot\omega_j}  \label{perturbed}\end{gather}
Now subtracting eq \ref{true} from eq \ref{perturbed} we  \begin{gather}\sum_{i=1}^t( p'_{i,j}\cdot q'_{i,l}-p_{i,j}\cdot q_{i,l})=\frac{v\cdot(|D_j|-|w_l|\omega_j)}{|D_j|\cdot(|D_j|+v\cdot\omega_j)} \end{gather}
Let $\mathcal{C}_{j,l}=\frac{v\cdot(|D_j|-|w_l|\omega_j)}{|D_j|\cdot(|D_j|+v\cdot\omega_j)}$.
\begin{gather}\sum_{i=1}^t| p'_{i,j}\cdot q'_{i,l}-p_{i,j}\cdot q_{i,l}| \geq \sum_{i=1}^t( p'_{i,j}\cdot q'_{i,l}-p_{i,j}\cdot q_{i,l}) = \mathcal{C}_{j,l}\end{gather}
Now, observe that $min\{max_{i}\{|p'_{i,j}\cdot q'_{i,l}-p_{i,j}\cdot q_{i,l}|\}\} $ occurs when $|p'_{1,j}\cdot q'_{1,l}-p_{1,j}\cdot q_{1,l}|=\cdots=|p'_{t,j}\cdot q'_{t,l}-p_{t,j}\cdot q_{t,l}|\geq \frac{\mathcal{C}_{j,l}}{t}$.
Let $p \in \mathcal{P}_j, p' \in \mathcal{P}_j', q \in \mathcal{Q}_i$ and $q' \in \mathcal{Q}_i'$ such that $\mathcal{P},\mathcal{P}',\mathcal{Q},\mathcal{Q}'$ corresponds to $min\{max_{i}\{|p'_{i,j}\cdot q'_{i,l}-p_{i,j}\cdot q_{i,l}|\}\} $. 
Now renaming as follows \begin{gather}q_1=max\{q,q'\}\\p_1=\left\{ \begin{array}{ll}
         p & \mbox{if $q_1=q$};\\
        p' & \mbox{otherwise}.\end{array} \right. 
        \\
        q_2=min\{q,q'\}\label{q2}\\p_2=\left\{ \begin{array}{ll}
         p & \mbox{if $q_2=q$};\\
        p' & \mbox{otherwise}.\end{array} \right.
        \end{gather}
        We get,
\begin{gather*}|p_1\cdot q_1-p_2\cdot q_2|\geq \frac{\mathcal{C}_j}{t}\\\Rightarrow |(q_1-q_2)\cdot p_1+q_2\cdot(p_1-p_2)| \geq \frac{\mathcal{C}_j}{t}\\\Rightarrow |(q_1-q_2)\cdot p_1|+|q_2\cdot(p_1-p_2)| \geq \frac{\mathcal{C}_{j,l}}{t}\\\Rightarrow |(q_1-q_2)\cdot p_1|+q_2\cdot \Big(1-t\cdot\frac{k}{|D_{max}|}\Big)\geq \frac{\mathcal{C}_j}{t} [\because \mbox{From lemma \ref{pmax}}] \\\Rightarrow |(q_1-q_2)\cdot p_1|+\frac{1}{2}\cdot \Big(1-t\cdot\frac{k}{|D_{max}|}\Big)\geq \frac{\mathcal{C}_{j,l}}{t}[\because \mbox{ By eq \ref{q2}} q_2 < \frac{1}{2}] \\ 
\Rightarrow (q_1-q_2)\cdot \Big(1-(t-1)\cdot\frac{k}{|D_{max}|}\Big)\geq \frac{\mathcal{C}_{j,l}}{t} -\frac{1}{2}\cdot \Big(1-t\cdot\frac{k}{|D_{max}|}\Big)\\ [\because \mbox{ By  lemma \ref{pmax}}]\\\Rightarrow q_1-q_2 \geq \frac{1}{\Big(1-(t-1)\cdot\frac{k}{|D_{max}|}\Big)}\cdot \Big(\frac{\mathcal{C}_{j,l}}{t} -\frac{1}{2}\cdot \Big(1-t\cdot\frac{k}{|D_{max}|}\Big)\Big)
\end{gather*}

Now clearly
\begin{gather}min_{j,l}\{C_{j,l}\}=max_j\{min_l\{\frac{\frac{2 d}{\epsilon}\cdot(|w_l|\omega_j-|D_j|)}{|D_j|\cdot(|D_j|+\frac{2 d \cdot \omega_j}{\epsilon})}\}\}\end{gather}
Let $\mathcal{C}_{min} = min_{j,l}\{\mathcal{C}_{j,l}\}$. Thus for any pair $(T,T') \in \mathcal{T}\times \mathcal{T}'$ we have \begin{gather*}\delta_{TV}(T,T') \geq sup_{w \in \mathcal{V}}|T(w)-T'(w)| \\\geq \frac{1}{\Big(1-(t-1)\cdot\frac{k}{|D_{max}|}\Big)}\cdot \Big(\frac{\mathcal{C}_{min}}{t} -\frac{1}{2}\cdot \Big(1-t\cdot\frac{k}{|D_{max}|}\Big)\Big)\numberthis\label{delta} \end{gather*}
Now from theorem 2.1, we have
\begin{gather}||T-T'||_1 \geq 2\cdot \frac{1}{\Big(1-(t-1)\cdot\frac{k}{|D_{max}|}\Big)}\cdot \Big(\frac{\mathcal{C}_{min}}{t} -\frac{1}{2}\cdot \Big(1-t\cdot\frac{k}{|D_{max}|}\Big)\Big)\end{gather}
\end{proof}

\section{Appendix B}
\subsection{Sensitive Keywords Lists}

Table~\ref{tab:keywordLists} presents the sensitive keywords list for each of the evaluation datasets. We discuss \name's efficacy in scrubbing them and its effect on the transcription utility in Sec.~\ref{sec:eval:sws}.

\begin{table}[ht]
\small
\centering
\ra{1}
\scalebox{0.70}{\begin{tabular}{@{}clllll@{}}\toprule
\textbf{Datasets} & \textbf{Facebook 1} & \textbf{Facebook 2} & \textbf{Facebook 3} & \textbf{Carpenter 1} & \textbf{Carpenter 2} \\
\midrule
\multirow{7}{*}{\rotatebox[origin=c]{90}{\textbf{keywords}}}  &  South Carolina & Grassley & Obama  & Miller & Wessler\\
& Harvey & Cambridge Analytica & Tillis & Smith & Massachusetts \\
& Sandberg & South Dakota & Harris  & Stewart & Kennedy \\
& Palo Alto &  & Kogan  & kogan & Carpenter\\
& Cheryl & & Alexander  & Greenwood & Graham \\
&  & &   & Roberts  & Kogan \\
&  & &  & Breyer & \\
\end{tabular}}
\caption{Sensitive keywords lists for each dataset}
\label{tab:keywordLists}
\end{table}

\subsection{NLP Generated Text}
In this section, we show a piece of the NLP generated text using the GPT2 language model. The model's input is a few sentences from the Carpenter case.

\textit{"In determining that the information is reasonable, the Court may find that this amount of intrusion is reasonable and that it is a reasonable effort by the government to harass privacy rights protections that normally apply to American citizens. Have there been any other issues before the Court before and after this hearing that we have not yet addressed. We will be looking again, on appeal. Would the government have done so differently today if they had arrested Petitioner five years earlier? The Court agrees that Petitioner has the right to petition the Court. I would agree with the court's conclusion that the government had previously attempted to arrest Petitioner, such as by conducting warrantless searches of his phone and computer in October 2009. There are some questions on the record, however, that we would have looked into in more detail in that regard.}

\textit{DAVID GREENE: Well, they seem to have two thoughts that I am not going to sit down and define as true or false; I think -- I've said it before. One thing we may know about this case is that the judge, Judith Miller, is a professor in Duke University's criminal-justice faculty and when she refused to reach a plea deal, the defense asked her to seek a final decision in the case. They considered a reduced sentence because this is his plea deal, which means he's allowed to serve his entire life in prison."}

\subsection{Segmentation Algorithm}
Here, we outline the formal segmentation algorithm discussed in Sec.~\ref{sec:Segmentation}

\begin{algorithm}
\caption{Hierarchical Speech Segmentation}\label{algo:segmentation}
\begin{algorithmic}[1] 
\Statex \textbf{Input}: Speech file $S$
\Statex \hspace{1cm}                       Minimum segment duration $l_{\mathbb{S}}$
\Statex \textbf{Output}: Sequence of speech segments $\mathbb{S}$
\Statex \textbf{Stage 1}: Silence Detection
\State Identify the timestamps $t_{S}=\{t^i_{S}\}$ corresponding to periods of silence in $S$ 
\STATE  Divide $S$ into a sequence of coarse segments $\mathbb{S}_1$ such that each segment is bounded by sequential timestamps from $t_S$
\Statex \textbf{Stage 2}: Pitch Detection
\STATE $\mathbb{S}=\varnothing$
\State \textbf{for} segment $\mathbb{S}^i_1 \in \mathbb{S}_1  $ 
\State \hspace{0.5cm} Divide each $\mathbb{S}^i_1$ into a sequence of finer segments, $\mathbb{S}^i_
2$, \Statex \hspace{0.5cm}by identifying  glottal cycles with buffers of \textit{non-}\Statex \hspace{0.5cm}\textit{speech} at the segment boundaries
 \State \hspace{0.5cm} \textbf{do} 
 \STATE \hspace{0.8cm} merge adjacent segments in $\mathbb{S}^i_2$ into a longer \Statex \hspace{0.8cm} segment $\mathbb{S}^i$   \State \hspace{0.5cm}\textbf{while} $(length(\mathbb{S}^i) < l_{\mathbb{S}})$
 \STATE \hspace{0.5cm} $\mathbb{S}=\mathbb{S} + \mathbb{S}^i$
 \hspace{1cm}\textcolor{blue}{$\rhd$} ``$+$'' denotes that the segments \Statex \hspace{3.2cm} are added in a sequence ordered \Statex \hspace{3.2cm} by  their timestamps
\STATE \textbf{end for}
\STATE Return $\mathbb{S}$ 
\end{algorithmic}
\end{algorithm}

\fi
\end{document}